\newcommand\establish[1]{\ensuremath{{\vartriangleleft}{\blacktriangleright}}}
\newcommand\setgen[1]{\ensuremath{#1}}
\newcommand\setA{\setgen{A}}
\newcommand\setC{\setgen{C}}
\newcommand\setD{\setgen{D}}
\newcommand\setU{\setgen{U}}
\newcommand\graphgen[1]{\ensuremath{#1}}
\newcommand\graphG{\graphgen{G}}
\newcommand\graphH{\graphgen{H}}
\newcommand\graphU{\graphgen{U}}
\newcommand\nodesgen[1]{\ensuremath{\graphgen{#1}_V}}
\newcommand\edgesgen[1]{\ensuremath{\graphgen{#1}_E}}
\newcommand\sourcegen[1]{\ensuremath{sc^{\graphgen{#1}}}}
\newcommand\targetgen[1]{\ensuremath{tg^{\graphgen{#1}}}}
\newcommand\nodesG{\nodesgen{G}}
\newcommand\nodesH{\nodesgen{H}}
\newcommand\edgesG{\edgesgen{G}}
\newcommand\edgesH{\edgesgen{H}}
\newcommand\sourceG{\sourcegen{G}}
\newcommand\sourceH{\sourcegen{H}}
\newcommand\targetG{\targetgen{G}}
\newcommand\targetH{\targetgen{H}}
\newcommand\graphlonggen[1]{\ensuremath{(\nodesgen{#1},\edgesgen{#1},\sourcegen{#1},\targetgen{#1})}}
\newcommand\graphlongG{\graphlonggen{G}}
\newcommand\katgen[1]{\ensuremath{\mathsf{#1}}}
\newcommand\katC{\katgen{C}}
\newcommand\katD{\katgen{D}}
\newcommand{\katbase}{\katgen{Base}}
\newcommand\katcat{\katgen{Cat}}
\newcommand\katgraph{\katgen{Graph}}
\newcommand\katGRAPH{\katgen{GRAPH}}
\newcommand\katmult{\katgen{Mult}}
\newcommand\katpar{\katgen{Par}}
\newcommand\katset{\katgen{Set}}
\newcommand\objectsgen[1]{\ensuremath{\katgen{#1}_V}}
\newcommand\objectsC{\objectsgen{C}}
\newcommand{\graphofcatgen}[1]{\ensuremath{gr(\mathsf{#1})}}
\newcommand\objectgen[1]{\ensuremath{#1}}
\newcommand\objecta{\objectgen{a}}
\newcommand\objectb{\objectgen{b}}
\newcommand\functorgen[1]{\ensuremath{\mathtt{#1}}}
\newcommand\calA{\ensuremath{\mathcal{A}}}
\newcommand\calI{\ensuremath{\mathcal{I}}}
\newcommand{\arity}[2][2]{\ensuremath{\alpha^{\sig{#1}[#2]}}}
\renewcommand\katgen[1]{\ensuremath{\mathbf{#1}}}
\newcommand{\katpred}{\ensuremath{\pmb{\Pi}}}
\newcommand{\setpred}{\setgen{\Pi}}
\newcommand{\katvar}{\katgen{Var}}
\newcommand{\katcontext}{\katgen{Ctxt}}
\newcommand{\kattype}{\katgen{Var}}
\newcommand{\katcarrier}{\katgen{Carr}}
\newcommand{\katstructure}{\katgen{Struct}}
\newcommand{\katsketch}{\katgen{Sketch}}
\newcommand{\katsemantics}{\katgen{Sem}}
\newcommand{\katmodels}{\katgen{Mod}}
\renewcommand\objectsgen[1]{\ensuremath{\katgen{#1}_{Obj}}}
\renewcommand\objectsC{\objectsgen{\katC}}
\newcommand\objectstype{\objectsgen{Var}}
\newcommand{\metasig}{\ensuremath{\Xi}}
\newcommand\featuregen[1]{\ensuremath{\mathtt{#1}}}
\newcommand\featuresibling{\featuregen{sibling}}
\newcommand\featurecomp{\featuregen{comp}}
\newcommand\featureid{\featuregen{id}}
\newcommand\featuremonic{\featuregen{monic}}
\newcommand\featurecsqu{\featuregen{csqu}}
\newcommand\featureprdtwo{\featuregen{prd(2)}}
\newcommand\featurepb{\featuregen{pb}}
\newcommand\featurefinal{\featuregen{final}}
\newcommand\featureprodn{\featuregen{prod(n)}}
\newcommand\featureprodtwo{\featuregen{prod(2)}}
\newcommand\featuremale{\featuregen{male}}
\newcommand\featureparent{\featuregen{parent}}
\newcommand\featuretermn{\featuregen{term(n)}}
\newcommand\featuretermtwo{\featuregen{term(2)}}
\newcommand\structureU{\ensuremath{\mathcal{U}}}
\newcommand\structureV{\ensuremath{\mathcal{V}}}
\newcommand\carrierU{\ensuremath{U}}
\newcommand\carrierV{\ensuremath{V}}
\newcommand{\predP}{\ensuremath{P}}
\newcommand{\typeX}{\ensuremath{X}}
\newcommand{\typeY}{\ensuremath{Y}}
\newcommand{\typeZ}{\ensuremath{Z}}
\newcommand{\substitutiona}{\ensuremath{\varphi}}
\newcommand{\expression}{\ensuremath{Exp}}
\newcommand{\expressiona}{\ensuremath{Exp_1}}
\newcommand{\expressionb}{\ensuremath{Exp_2}}
\newcommand\arityof{\ensuremath{\triangleright}}
\newcommand{\typemorphisma}{\ensuremath{\varphi}}
\newcommand{\typemorphismb}{\ensuremath{\psi}}
\newcommand\condexistential[4]{\ensuremath{(#1\rightarrow_{#2}\exists( #3: #4))}}
\newcommand\conduniversal[4]{\ensuremath{(#1\rightarrow_{#2}\forall( #3 : #4))}}
\newcommand\propimplication[2]{\ensuremath{(#1\rightarrow #2)}}
\newcommand\existsvia[3]{\ensuremath{\exists_{#1}(#2:#3)}}
\newcommand\forallvia[3]{\ensuremath{\forall_{#1}(#2:#3)}}
\newcommand\solution[4]{\ensuremath{#1\; \models^{#2} #3\arityof #4}}
\newcommand{\initialobject}{\ensuremath{0}}
\newcommand\interpretationa{\ensuremath{\iota}}
\newcommand\interpretationb{\ensuremath{\varrho}}
\newcommand\contextG{\ensuremath{G}}
\newcommand\contextK{\ensuremath{K}}
\newcommand\contextL{\ensuremath{L}}
\newcommand\contextR{\ensuremath{R}}
\newcommand\contextinterpretation[2]{\ensuremath{(#1,#2)}}
\newcommand\constraint[3]{\ensuremath{(#1\arityof #2,#3)}}
\newcommand\bindinga{\ensuremath{\beta}}
\newcommand\functorCtr{\functorgen{Cstr}}
\newcommand\functorCstr{\functorgen{Cstr}}
\newcommand\functorInt{\functorgen{Intr}}
\newcommand\functorMod{\functorgen{Mod}}
\newcommand{\contextmorphisma}{\ensuremath{\varphi}}
\newcommand{\matcha}{\ensuremath{\mu}}
\newcommand{\matchb}{\ensuremath{\nu}}
\newcommand{\sketchG}{\ensuremath{\mathfrak{G}}}
\newcommand{\sketchK}{\ensuremath{\mathfrak{K}}}
\newcommand{\sketchL}{\ensuremath{\mathfrak{L}}}
\newcommand{\sketchR}{\ensuremath{\mathfrak{R}}}
\newcommand{\sketchS}{\ensuremath{\mathfrak{S}}}
\newcommand{\setofconstraintsG}{\ensuremath{C^{\mathfrak{G}}}}
\newcommand{\setofconstraintsK}{\ensuremath{C^{\mathfrak{K}}}}
\newcommand{\setofconstraintsL}{\ensuremath{C^{\mathfrak{L}}}}
\newcommand{\setofconstraintsR}{\ensuremath{C^{\mathfrak{R}}}}
\newcommand\respace{\hspace{-2.8pt}}
\newcommand\lsemm{[ \respace [ } 
\newcommand\rsemm{]\respace ]} 
\newcommand\semm[1]{\ensuremath{\lsemm\, #1 \, \rsemm}}
\renewcommand\arity{\ensuremath{\alpha}}
\renewcommand\a{\ensuremath{\alpha}}
\newcommand\flar[3]{\ensuremath{#1\!:#2\rightarrow #3}}
\newcommand\dlar[3]{\ensuremath{#1:\xymatrix{{#2}\ar@{-|}[r]&{#3}}}} 
\newcommand{\la}{\langle}
\newcommand{\ra}{\rangle}
\newcommand{\isdef}{:=}
\newtheorem{open}[theorem]{Open issue}
\title{Logics of First-Order Constraints  \linebreak - A Category Independent Approach\footnote{This paper is a slight revision of the version prepared for CALCO 2019. Unfortunately, the paper presents our unconventional approach to logic a bit too abstract and the topic was not fitting that well in the focus of CALCO 2019 thus it was not included in the proceedings. We got, however the chance to give a talk within the CALCO Early Ideas stream. We included in this revision also some further ideas, conjectures, open issues and discussions.}}
\titlerunning{Logics of First-Order Constraints}
\author{Uwe Wolter}{Department of Informatics, University of Bergen, Norway}{Uwe.Wolter@uib.no}{}{}
\authorrunning{U.~Wolter}
\keywords{first-order logic, abstract model theory, institution, generalized sketch, algebraic specification, constraint, description logic}
\begin{document}

\maketitle

\begin{abstract}
Reflecting our experiences in areas, like Algebraic Specifications, Abstract Model Theory, Graph Transformations, and Model Driven Software Engineering  (MDSE), we present a general, category independent approach to Logics of First-Order Constraints (LFOC). Traditional First-Order Logic,  Description Logic and the sketch framework are discussed as examples. 

We use the concept of institution \cite{Dia08,GB92} as a guideline to describe LFOC's. The main result states that any choice of the six parameters, we are going to describe, gives us a corresponding ``institution of constraints'' at hand. The ``presentations'' for an institution of constraints can be characterized as ``first-order sketches''. As a corresponding variant of the ``sketch-entailments'' in \cite{Mak97}, we finally introduce ``sketch rules'' to equip LFOC's with the necessary expressive power.
\end{abstract}

\section{Introduction}

The paper addresses a technological layer in traditional specification formalisms, like First-Order Logic (FOL), Universal Algebra and Category Theory, which is, not seldom, overseen and/or undervalued. In Software Engineering, in contrast, this layer is omnipresent in the form of a multitude of different kinds of ``diagrams'' (software models\footnote{We use the term ``model'' in two conflicting meanings: A ``software model'' is an abstract representation of certain aspects and properties of a software system while a ``model'' in logic is a structure conforming to a formal specification.}).

Nevertheless, this layer becomes also manifest in some concepts, we meet in traditional formalisms: ``Elementary diagrams'' in FOL, ``generators and defining relations'' in Group Theory and, as one of the paradigmatic examples, ``sketches'' in Category Theory \cite{BW90}.

There are different lines of motivation and challenges encouraging us to establish this layer as a subject of its own and start to develop an abstract and general account of it.

\noindent
\textbf{Diagram Predicate Framework (DPF):} Generalized sketches have been developed independently by Makkai, motivated by his work on an abstract formulation of Completeness Theorems in logic \cite{Mak97}, and a group in Latvia around Diskin, triggered by their work on data modeling \cite{CD96,Dis95}. Our further development of the generalized sketch approach, now called  ``Diagrammatic Predicate Framework (DPF)'', has been applied to a wide range of problems in Model Driven Software Engineering  (MDSE) \cite{DW08,RRLW09_JLAP,RRLW10_JLAP_NWPT09,RLGRW14_JFAC}. Applying DPF, we identified three deficiencies: (1) No operations. (2) Only ``atomic constraints'' and thus restricted expressiveness. (3) Too many auxiliary items when software models are formalized as sketches. Deficiency (1) is addressed in \cite{WolterDK18} while deficiencies (2) and (3) triggered the idea of arbitrary ``first-order constraints''\footnote{In Category Theory, the term ``diagram'' is used instead of ``constraint''. In finite product sketches, e.g., finite product diagrams are ``atomic constraints'' since finite products are primitives of the language.}. 

\noindent
\textbf{Category Theory:} Categories are graphs equipped with  composition and identities. All concepts and results in Category Theory are expressed in a language about graphs with composition and identity as the only primitives . The challenge is to put this understanding on a precise formal ground by developing a purely ``diagrammatic'' (graph-based) presentation of Category Theory where limits and colimits, for example, are described by ``first-order constraints'' on graphs.  
The vision is to have, one day, an interactive tool that allows our students to define concepts and to prove results in Category Theory based on pure ``diagrammatic reasoning''. Or, to formulate it differently: Let us present Category Theory in such a way, that ``diagram chasing'' becomes a precise and well-founded proof technique.

Only recently, in November 2020, we became aware of "the language of diagrams" introduced in \cite{Freyd76} and used in \cite{FreydS90} to present and define categorical concepts and carry out proofs in a diagrammatic manner. The logic of first-order constraints, we have been envisioned for category theory, seems to include, at the end, a variant of this "language of diagrams" of Freyd. We learned also about other interesting activities around the idea to turn "diagram chasing" into a precise and well-founded proof technique (see \cite{ochs2020favorite}).

\noindent
\textbf{Algebraic Specifications:} ``Partial algebras freely generated by a set of variables and a set of equations'' are a generalization of the concept of ``groups generated by a set of generators and a set of defining relations'' and have been a central technical tool in the small school on Partial Algebraic Specifications in former East-Germany  \cite{KaphengstReichel1971,ReichelHK1980,Rei87,Wol90,CGW95}). For decades, we intended to revise this concept and to lift it to a more general and broader level.

\noindent
\textbf{Graph Constraints:} Graph constraints, as presented in \cite{EEPT06} for example, have been a latent inspiration and reference during the development of the framework presented here. A thorough analysis of graph constraints and, especially, of the idea to "represent first-order logic using graphs" in \cite{Rensink04} will be a subject of future research.

\noindent
\textbf{Description Logic:}  Only when the CALCO 2019 version of this paper was about to be finished, we got acquainted with Description Logics \cite{BaaderHS2007}. We saw that the distinction between ``ABox'' and ``TBox'' correlates very well with our ideas. So, as a kind of ``a posterior motivation'', we hope that our approach may open a way to develop appropriate ``diagrammatic'' versions of Description Logics.

\noindent
\textbf{Content of the paper:}
To cover a wide range of applications, we decided to develop and present logics of first-order constraints (LFOC's) in a top-down manner - from abstract to more concrete. 
In the paper, we present the first stage of expansion on a very abstract level. We show how LFOC's can be defined in arbitrary categories. In the next stage of expansion we have to investigate, once in detail, what additional categorical infrastructure we need to have things like "closed formulas", substitutions, instances of "formulas" and similar features of traditional logics. Moreover, we have to work out all the examples in very detail.

We revise and extend, radically, the concepts and results in \cite{DW08}. Thereby, we use the concept of institution \cite{GB92,Dia08} as a guideline to describe our logics in a well-organized and incremental way. The main result states that any choice of the six parameters, we are going to introduce, namely base category, variable declarations, footprint, carriers, structures, and contexts, gives us a corresponding ``institution of constraints'' at hand. 

Base categories are discussed in Section \ref{sec:base-category} and variable declarations in Section \ref{sec:variables-Features-footprints}, respectively. To avoid misperceptions, we use in Subsection \ref{sec:syntax-variables-features-footprints} the term "feature" instead of "predicate" and the term "footprint" instead of "signature". So, a feature is given by a symbol/name and a variable declaration called its arity. A footprint \metasig\ is then just a collection of features. Based on a choice of possible carriers, we define \metasig-structures in Subsection \ref{sec:semantics-variables-features-footprints} in full analogy to traditional FOL. 

Section \ref{sec:first-order-expressions} presents the syntax and semantics of "first-order feature expressions". By a ``feature expression'' we mean something like a ``formula with free variables'' in traditional FOL. We do not consider them, however, as formulas, but rather as ``derived features''\footnote{In \cite{FreydS90} Freyd and Scedrov use the term "elementary predicate" for what we would consider a certain kind of "first-order feature expressions".}. Each feature expression is build upon a corresponding variable declaration that we also call its arity (since it is a "(derived) feature"). 

Generalizing concepts like ``set of generators'' in Group Theory, ``underlying graph of a sketch'' in Category Theory, ``set of individual names'' in Description Logics and ``underlying graph of a model'' in Software Engineering, we coin in Section \ref{sec:institutions-of-contraints} the concept ``context''. Contexts are our ''signatures'' in the sense of institutions.

As ''sentences'', in the sense of institutions, we introduce ``constraints'' in generalizing the corresponding concepts ``defining relation'' in Group Theory, ``diagram in a sketch'' in Category Theory, ``concept/role assertion'' in Description Logic and ``constraint'' in Software Engineering. 
''Models'', in the sense of institutions, are interpretations of contexts in \metasig-structures. Section \ref{sec:institutions-of-contraints} closes with the main theorem stating that any choice of the mentioned six parameters gives us a corresponding \textbf{institution of constraints} at hand.

To get, however, adequate ``logics of first-order constraints'', we have to take two steps more: 
Any institution provides, in a canonical way, a corresponding category of presentations as well as an extension of the ``model functor'' of the institution to this category of presentations. In Section \ref{sec:sketches} we outline this procedure for ``institutions of constraints'' where the presentations turn out to be nothing but the ``first-order sketches'', we have been looking for. In the final step in Section \ref{sec:sketch-rules}, we introduce ``sketch rules'', as a generalization of the ``sketch-entailments'' in \cite{Mak97}, to equip LFOC's with the necessary expressive power. We discuss concepts, like ``elementary diagrams'', ``axioms'' and ``deduction'', in the light of the new kind of logics, and close the paper with a short section concerning future research.

This paper presents only a starting point for the development of a framework of Logics of First-Order Constraints. Examples are not worked out in detail and many insights and ideas are only mentioned or sketched. Also the relation between LFOC's and traditional FOL's, for example, is not fully clarified yet and is not discussed. Nevertheless, we are convinced that the first building block of a general framework of LFOC's is here.

\section{Base Category}
\label{sec:base-category}

We start with some notational conventions: By \objectsC\ we denote the collection of objects of a category \katC\ and by $gr(\katC)$  the underlying graph of \katC.  We write often $\objecta\in\katC$, instead of $\objecta\in\objectsC$, if \objecta\ is an object in a category \katC. \(\katC(\objecta,\objectb)\) denotes the collection of all morphisms from object \objecta\ to object \objectb\ in \katC. 
That a graph \graphG\ is a subgraph of a graph \graphH, is denoted by $\graphG\sqsubseteq\graphH$. Analogously, $\katC\sqsubseteq\katD$, means that the category \katC\ is a subcategory of \katD.

To define a certain logic of first-order constraints, we first have first to choose a base category  comprising as well the basic syntactic entities as the  semantic domains of our logic. The base category  fixes, somehow, the ``conceptual universe'' we want to work within.
\begin{definition}[First parameter: Base Category]
	\label{def:first-parameter}
	The first parameter of a logic of first-order constraints (LFOC) is a chosen \textbf{base category} \katbase.
\end{definition}
A formalization of modeling techniques in Software Engineering, like ER diagrams, class diagrams and relational data models, for example, relies often on the category \katGRAPH\ of (directed multi) graphs and graph homomorphisms as a base category (compare \cite{DW08,RRLW09_JLAP,RRLW10_JLAP_NWPT09,RLGRW14_JFAC}). 
A (directed multi) graph $\graphG=\graphlongG$ is given by a collection \nodesG\ of vertices, a collection \edgesG\ of edges and functions \(\sourceG:\edgesG\to\nodesG\), \(\targetG:\edgesG\to\nodesG\) assigning to each edge its source and target vertex, respectively.  A graph \graphG\ is ``small'' if \nodesG\ and \edgesG\ are sets. 

A homomorphism \(\varphi:\graphG\to\graphH\) between two graphs is given by a pair of maps \(\varphi_V:\nodesG\to\nodesH\), \(\varphi_E:\edgesG\to\edgesH\) such that \(\sourceG;\varphi_V=\varphi_E,\sourceH\) and \(\targetG;\varphi_V=\varphi_E,\targetH\).

\katGRAPH\ comprises as well finite and small graphs as the (''very big'') underlying graphs of the category \katset\ of all sets and total maps, the category \katmult\ of all sets and multimaps (set-valued functions) and the category \katgraph\ of small graphs, for example. \katGRAPH\ is also the category of choice for a uniform description of  specification formalisms, like (the different variants) of First-Order Logic (FOL) and Universal Algebra, for example.
\begin{example}[Universal Algebra: Base category]
	\label{ex:universal-algebra:base-category}
	To illustrate our approach on the meta-level of formalisms, we will outline how the traditional formalism \textbf{''Many-Sorted Universal Algebra''} may be described. \katGRAPH\ is our base category in this example. 
	
\end{example}

\begin{example}[FOL: Base category]
	\label{ex:FOL:base-category}
	This example is located one abstraction (modeling) level below Example \ref{ex:universal-algebra:base-category}. We are not going to describe the formalism \textbf{''FOL with predicates only''}. Instead, we examine this variant of FOL on the level of first-order signatures and first-order structures. 
	Therefore, we choose the category \katset\ as the base category.
	The prototypical description logic \textbf{''Attributive Concept Language with Complements'' (ALC)} can be seen as a fragment of ``FOL with predicates only'' (see \cite{BaaderHS2007}).
\end{example}

\begin{example}[Category Theory: Base category]
	\label{ex:category-theory:base-category}
	Located on the same abstraction (modeling) level as Example \ref{ex:FOL:base-category}, we will outline a diagrammatic (graph-based) version of the theory of small categories, thus the category \katgraph\ of small graphs is our base category of choice.
\end{example}
\begin{open}[Other base categories]
\label{open:other-base-categories}
To validate that our approach is indeed universal, we should also find an example that goes beyond graphs. We could look at E-graphs \cite{EEPT06}, for example. This may also give a guideline how to extend our approach to arbitrary presheaf topoi!?

%
\end{open}

\section{Variables, Features and Footprints}
\label{sec:variables-Features-footprints}

\subsection{Syntax: Variables, Features and Footprints}
\label{sec:syntax-variables-features-footprints}
Traditionally, the construction of syntactic entities in logic, like terms, expressions and formulas, starts by declaring what variables can be used. Often, we assume an enumerable set of variables and then any term, expression or formula is based upon a chosen finite subset of this enumerable set of variables. Moreover, variable substitutions can be described by maps between finite sets of variables.
Abstracting from this traditional approach, we announce what kind of  ``variable declarations'' we want to use in our logic. 
\begin{definition}[Second parameter: Variables]
\label{def:second-parameter}
As the second parameter of a LFOC, we choose a small subcategory \kattype\ of the base category \katbase. We refer to the objects in \kattype\ as \textbf{''variable declarations''} while the morphisms in \kattype\ will be called \textbf{''variable substitutions''}.
\end{definition}

\begin{remark}[Signature extensions vs. generators]
\label{rem:signature-extensions-vs-generators}
CALCO-reviewer 2 stated that "it may be worth noting that this is a completely different view on variables compared to the institutional 'tradition', where variables generally depend on the notion of signature" (compare \cite{Dia08}). 

The mentioned "institutional tradition" reflects a standard technique in Universal Algebra and Algebraic Specifications used to describe the construction of free algebras by means of so-called signature extensions, i.e., the elements of a given $\Sigma$-algebra are added as auxiliary constants to the signature $\Sigma$ and then they are "syntactically translated" along a signature morphism \(\varphi:\Sigma\to\Sigma'\). 
As indicated in the introduction, we consider this technique as inadequate and kind of dubious. We insist to keep apart signatures and variables (or, more precisely, "generators of algebras"). The idea of "generators of algebras" is one of our main motivations to introduce "contexts" as a basic building block of LFOC's (see Section \ref{sec:contexts-and-sentence-functor}). 
\end{remark}

\begin{example}[Universal Algebra: Variables]
	\label{ex:universal-algebra:variables}
	Our ``variable declarations'' are ``graphs of variables'', i.e., we declare two kinds of variables: vertex variables and edge variables that are connecting vertex variables. 
	As \kattype\ one can choose an enumerable subcategory  of \katGRAPH. 	
	For this paper, we choose \kattype\ to be the full subcategory of \katGRAPH\ given by all finite graphs $\graphG=(\nodesG,\edgesG,\sourceG:\edgesG\to\nodesG,\targetG:\edgesG\to\nodesG)$ with \nodesG\ a finite subset of the set $\{pv,pv_1,pv_2,\ldots,xv,xv_1,xv_2,\ldots\}$ and \edgesG\ a finite subset of the set $\{pe,pe_1,pe_2,\ldots,xe,xe_1,xe_2,\ldots\}$.
	The variables starting with \(p\) will be preferably used to describe the arities of ``feature symbols'' (compare Definition \ref{def:third-parameter} and the corresponding examples). ``\(e\)'' in a variable name stands for ``edge'' while ``\(v\)'' refers to ``vertex''. 
\end{example}

\begin{example}[FOL: Variables]
	\label{ex:FOL:variables}
	''Variable declarations'' are traditionally just ``finite sets of variables''. 
	%
	We take \kattype\ to be the full subcategory of \katset\ given by all finite subsets of the set $\{p,p_1,p_2,\ldots,x,x_1,x_2,\ldots\}$.
	Officially, there are no variables in ALC. To describe, however, ALC as a fragment of FOL we choose the same variable declarations as in FOL but restrict variable substitutions to injective maps.
\end{example}

\begin{example}[Category Theory: Variables]
	\label{ex:category-theory:variables}
	As in Example \ref{ex:universal-algebra:variables}, we choose \kattype\ to be given by all finite graphs $\graphG=\graphlongG$ with \nodesG\ a finite subset of the set $\{pv,pv_1,pv_2,\ldots,xv,xv_1,$ $xv_2,\ldots\}$ and \edgesG\ a finite subset of $\{pe,pe_1,pe_2,\ldots,xe,xe_1,xe_2,\ldots\}$, but now considered as a full subcategory of \katgraph.
\end{example}

In traditional FOL the arity of predicates is described by natural numbers. To declare a binary predicate, for example, we have to declare a predicate symbol \texttt{P} with arity 2. This means, whenever \texttt{P} appears in an expression there are two adjoint ``positions'' for variables or values, respectively. Correspondingly, the semantics of \texttt{P} in a first-order structure with carrier set \setA\ is a subset of $\setA\times\setA$.

There is a bijection between the Cartesian product $\setA\times\setA$ and the set of all maps from a two-element set of variables (identifiers of positions), like $\{p_1,p_2\}$ for example, into the set \setA. Relying on this observation, we define arities by means of ``variable declarations''.
\begin{definition}[Third parameter: Footprint]
\label{def:third-parameter} The third parameter of a LFOC is a \textbf{footprint} $\metasig=(\setpred,\arity)$ over \kattype\ given
by a set $\setpred$ of \textbf{feature symbols} and a
	map \flar{\arity}{\setpred}{\objectstype}. For any
	$\predP\in \setpred$, the $\kattype$-object $\arity(\predP)$ is called the \textbf{arity of $P$}. We will often write $\a \predP$ for $\arity(\predP)$.
\end{definition}
\begin{remark}[Terminology: Footprint vs.\ signature]
	\label{rem:terminology-footprint-vs-signature}
	In most of our applications, footprints occur as \textbf{meta-signatures}, in the sense, that each specification formalism (modeling technique) is characterized by a certain footprint. Each of the formalisms Universal Algebra, Category Theory, and First-Order Logic is characterized by a certain footprint. The sketch data model in \cite{JRW02} corresponds to a certain footprint and so on.
	For footprints of the modeling techniques ``class diagrams'' and  ``relational data model'' we refer to \cite{Rutle2010phd,RRLW10_JLAP_NWPT09}.
	
	Until today, we used in all our MDSE papers the terms ``signature'' instead of ``footprint'' and ``predicate symbol'' instead of ``feature symbol''.
	This turned out to be a source for serious misunderstandings and misleading perceptions thus we decided to coin in this paper the new terms ``footprint'' and ``feature symbol'' instead.
\end{remark}
\begin{remark}[Dependencies between features]
\label{rem:dependencies-between-features}
In \cite{DW08} we introduced and elaborated ``dependencies'' between feature symbols. We considered categories \katpred\ of feature symbols together with arity functors \flar{\arity}{\katpred}{\kattype}. 
We drop  ``dependencies'' in this paper, to make things not too complicated in the beginning.

Extending Makkai's approach \cite{Mak97}, we worked in \cite{DW08} with categories \katpred\ of feature symbols, instead of just sets of feature symbols, and with arity functors \flar{\arity}{\katpred}{\kattype}, instead of just arity maps. Arrows between feature symbols represent dependencies between features. This allows us to reflect, already on the very basic level of feature symbols and thus prior to arities and semantics of features, that certain features depend on (are based upon) other features. As examples, one may express that both concepts "pullback" and "pushout" are based upon the concept "commutative square" and that the categorical concept "inverse image" depends on the concept "monomorphism". 

Any semantics of feature symbols has then to respect those dependencies. Dependency arrows are a tool to represent knowledge about and requirements on features prior to and independent of any kind of logic. Dependency arrows make somehow the framework of generalized sketches conceptual and structural round. It is may be worth to mention that the concept of "order-sorted algebra" can be seen as a somehow related idea where we do not impose, on the level of signatures, arrows between predicate symbols but between sort symbols \cite{GoguenMeseguer1992}.

In this first paper about LFOC's we drop dependency arrows due to, at least, three reasons:
(1) In order not to scare too many potential readers we do not want to deviate too much from the traditional first-order logic setting. (2) If we introduce dependencies between feature symbols, we should consequently describe to what extend and how they generate dependencies between feature expressions (introduced in Section \ref{sec:first-order-expressions}). On one side, this looks technically not fully trivial and, on the other side, such an effort has no relevance for our applications. (3) The requirements expressed by dependency arrows between feature symbols can be mimicked by the logic means we are going to introduce later.
\end{remark}
\begin{example}[FOL: Footprint]
\label{ex:FOL:footprint}
On this abstraction level, footprints correspond just to  traditional FOL signatures. Especially, features are just traditional predicates. In our sample footprint we declare a unary feature symbol \featuremale\ with arity $\{p\}$ and a tertiary feature symbol \featureparent\ with arity \(\{p_1,p_2,p_3\}\). 
	
A signature in ALC declares a set \(N_C\) of ``concept names'' and a set \(N_R\) of ``role names''. In view of footprints, this means to  declare a set \(N_C\) of feature symbols all with arity  $\{p\}$ and a set  \(N_R\) of feature symbols all with arity  $\{p_1,p_2\}$. A signature in ALC declares also a set \(N_O\) of ``individual names (nominals, objects)''. In our approach, those ``individual names'' are located in the so-called ``contexts'' (compare Example \ref{ex:FOL:sketches}).
\end{example}

\begin{example}[Category Theory: Footprint]
\label{ex:category-theory:footprint}
Categories are graphs equipped with a composition operation and an identity operation. At the present stage of expansion, we do not include operations (to be defined along the ideas from \cite{WolterDK18}) in our footprints thus we have to formalize composition and identity by means of features. In such a way, the footprint for the formalism ``Category Theory'' declares two feature symbols \featurecomp\ and \featureid\ with arities described in the following table.\\
\begin{center}		
\begin{tabular}{|p{16mm}|p{35mm}|p{16mm}|p{20mm}|}
\hline
\featurecomp & $\xymatrix{pv_1 \ar[r]^{pe_1} \ar@/_1pc/[rr]_{pe_3} & pv_2 \ar[r]^{pe_2}  & pv_3}$ &
\featureid &
$\xymatrix{pv\ar@(r,d)[]^{pe} }$ 
\\
\hline
\end{tabular}
\end{center}

To force both existence and uniqueness of composition and  identities, respectively, we have to use ``sketch rules'', introduced in Definition \ref{def:sketch-rule}.  Also associativity of composition and neutrality of identities w.r.t.\ composition can be axiomatized by means of sketch rules.

We could extent the language of our formalism "Category Theory" to represent properties like monomorphism, commutative square, binary product and pullbacks, for example. That is, we could declare the following feature symbols with their corresponding arities:
\begin{center}
{\rm
\begin{tabular}{|p{16mm}|p{37mm}|}
\hline
\vspace{0.1mm} \textbf{Symbol} $P$& \vspace{0.1mm} \textbf{Arity} $\alpha P$ \\
\hline
\featuremonic & \hspace{3ex}
$\xymatrix{pv_1 \ar[r]^{pe_1}& pv_2}$ \\
\hline
\featurecsqu & \hspace{1ex} $\xymatrix{pv_1 \ar[r]^{pe_1} \ar[d]_{pe_3} & pv_2 \ar[d]^{pe_2}\\  pv_3 \ar[r]^{pe_4}& pv_4}$ \\
\hline
\featureprdtwo & $\xymatrix{& pv_1 \ar[dr]^{pe_2} \ar[dl]_{pe_1}\\   pv_2&& pv_3}$ \\
\hline
\featurepb & \hspace{1ex} $\xymatrix{pv_1 \ar[r]^{pe_1} \ar[d]_{pe_3} & pv_2 \ar[d]^{pe_2}\\  pv_3 \ar[r]^{pe_4}& pv_4}$ \\
\hline
\end{tabular}}
\end{center}

One of our main motivations to develop LFOC's was to be able to express the universal properties, defining those concepts only based on composition and identities, by means of first-order feature expressions introduced in Section \ref{sec:first-order-expressions}. 
\end{example}
\begin{example}[Universal Algebra: Footprint]
\label{ex:universal-algebra:footprint}
We don't want to recapitulate the inside that the formalism "Many-sorted Universal Algebra (with equations)" can be encoded by means of the formalism ``finite product sketches'' \cite{BW90}. We rather intend to reflect directly the traditional step-by-step approach in Universal Algebra: introduce sorts -- construct finite Cartesian products of sorts -- add operations -- introduce variables -- construct terms -- introduce equations. 

So, we declare a feature \featurefinal\ to denote empty products and a feature \featureprodn\ to denote n-ary products for each \(n\geq 1\).
Note, that we hide, in contrast to finite product sketches, the "auxiliary" projection arrows. This reflects the traditional presentation of signatures in Universal Algebra, where the projections are not present in the syntax but only come to the surface in the semantics (see the short remark on "(semantical) induced sketch rules" in Section \ref{sec:sketch-rules}).

Moreover, we declare a feature \featuretermn\ describing the construction of terms for n-ary operation symbols for each \(n\geq 1\) . The table shows the case \(n=2\), where \(pe_1,pe_2\) are the ``subterms'', \(pe_3\) is the ``operation symbol'' and \(pe_4\) is the ``generated term'' \(pe_3\la pe_1,pe_2\ra\). It would be natural, to express the requirement that \(pv_3\) is a binary product of \(pv_1\) and \(pv_2\) by a dependency between \featureprodtwo\ and \featuretermtwo\ (see Remark \ref{rem:dependencies-between-features}). Since, we don't include, at the moment, dependencies, we may declare later a corresponding "axiom rule" (see Section \ref{sec:sketch-rules}). As usual in Universal Algebra, we hide the "auxiliary arrow" \(\la pe_1,pe_2\ra:pv\to pv_3\) and the information that \(pe_4\) is the composition of \(\la pe_1,pe_2\ra\) and \(pe_3\) (see Example \ref{ex:universal-algebra:semantics}).
\begin{center}
\begin{tabular}{|p{8mm}|p{7mm}|p{11mm}|p{32mm}|p{11mm}|p{34mm}|}
	\hline
	\featurefinal & \hspace{0ex}
	$\xymatrix{pv}$ &
	\featureprodn & $\xymatrix{& pv \\
	pv_1\ar@{}[rr]|{\bullet\quad\bullet\quad\bullet}&& pv_n}$ &
	\featuretermtwo & $\xymatrix{pv_1 & pv_3\ar[r]^{pe_3} & pv_4\\
                	pv\ar[rr]^(.6){pe_2}\ar[u]^{pe_1}\ar[urr]^(.3){pe_4}&& pv_2}$ \\
	\hline
\end{tabular}
\end{center}
To formalize the step-by-step approach precisely, we should actually use as arities graphs typed over a type graph \graphgen{UA} consisting of two vertices \(\mathit{Sort},\mathit{CProd}\) and two edges \(op,term\) from \(\mathit{CProd}\) to \(\mathit{Sort}\). The point is, that the vertex \(pv\) in \featurefinal, \featureprodn, \featuretermtwo\ as well as the vertex \(pv_3\) in \featuretermtwo\ are of type \(\mathit{CProd}\) while the vertices \(pv_1,\ldots,pv_n\) in \featureprodn\ and the vertices \(pv_1,pv_2\) in \featuretermtwo\ are of type \(\mathit{Sort}\). Moreover, it would be necessary to consider a sequence of ``language extensions'': (1) Only ``type node'' \(\mathit{Sort}\). (2) Add ``type node''  \(\mathit{CProd}\) and features \featurefinal, \featureprodn. (3) Add ``type edge'' \(op\). (4) Add ``type edge'' \(term\) and feature \featuretermn. Both topics - ``typed variable declarations'' and ``language extensions'' - will be subjects of future developments.
\end{example}

\subsection{Semantics: Variables, Features and Footprints}
\label{sec:semantics-variables-features-footprints}

To make things not too complicated,
we work here with the traditional semantics-as-interpretation paradigm, in contrast to \cite{DW08}, where we spelled out the semantics-as-instance paradigm (fibred semantics).
To define the semantics of variables and features, we have first to decide for (potential) carriers of structures.
\begin{definition}[Fourth parameter: Carriers]
\label{def:fourth-parameter}
As the fourth parameter of a LFOC, we choose a subcategory \katcarrier\ of \katbase\ of (potential) \textbf{carriers} of \;\metasig-structures.
\end{definition}
\begin{example}[FOL: Carriers]
\label{ex:FOL:carriers}
Here we choose just \(\katcarrier=\katbase=\katset\).  ALC considers also arbitrary sets as potential carriers only that they are called ``domains (of an interpretation)''.
\end{example}

\begin{example}[Category Theory: Carriers]
\label{ex:category-theory:carriers}
We could choose only those graphs that appear as underlying graphs of small categories. We will, however, not restrict ourselves and chose, in analogy to Example \ref{ex:FOL:carriers}, \(\katcarrier=\katbase=\katgraph\).
\end{example}
\begin{example}[Universal Algebra: Carriers]
\label{ex:universal-algebra:carriers}
The traditional approach choses the underlying graph \graphofcatgen{\katset} of the category \katset\ as the only (potential) carrier.

%
\end{example}

The semantics of a ``variable declaration'' \(\typeX\in\katvar\) relative to a chosen carrier \(\carrierU\in\katcarrier\) is just the set of all \textbf{''variable assignments''} (remind that \(\katvar\sqsubseteq\katbase\) and \(\katcarrier\sqsubseteq\katbase\))
\begin{equation}\label{eq:semantics-variable-declaration}
	\semm{\typeX}^\carrierU\isdef\katbase(\typeX,\carrierU).
\end{equation}
Structures for footprints are defined in full analogy to the definition of structures for signatures in traditional first-order logic.
\begin{definition}[Structures]
\label{def:structures}
A \metasig-structure $\structureU =(\carrierU,\setpred^\structureU)$ is given by an object \carrierU\ in \katcarrier, the \textbf{carrier of} \structureU, and a family $\setpred^\structureU=\{\semm{\predP}^\structureU\mid \predP\in\setpred\}$ of sets $\semm{\predP}^\structureU\subseteq\katbase(\a\predP,\carrierU)$ of \textbf{''valid interpretations''} of feature symbols \predP\ in \carrierU. 
\end{definition}
Homomorphisms are also defined in the usual and obvious way that ``truth is preserved''.
\begin{definition}[Homomorphisms]
	\label{def-homomorphism}
	A \textbf{homomorphism} $\varsigma:\structureU\to\structureV$ between \metasig-structures is given by a morphism \flar{\varsigma}{\carrierU}{\carrierV} in \katcarrier\ such that $\interpretationa\in\semm{\predP}^\structureU$ implies $\interpretationa;\varsigma\in\semm{\predP}^\structureV$ for all feature symbols \predP\ in \setpred\ and all interpretations $\interpretationa:\a\predP\to\carrierU$.
	$$\xymatrix{
		&  \a\predP\ar[dl]_{\interpretationa\in\semm{\predP}^\structureU}
		\ar[dr]^{\interpretationa;\varsigma\in\semm{\predP}^\structureV}\ar@{}[d]|(.5)\Rightarrow
		\\
		\carrierU\ar[rr]^\varsigma
		&& \carrierV
	}$$
\end{definition}
Identities of carriers define identity homomorphisms and composition of homomorphisms is inherited from composition in \katcarrier. In such a way, we obtain a category $\katstructure(\metasig)$ of all ``available'' \metasig-structures. We are, however, free to choose only those structures, we are interested in.
\begin{definition}[Fifth parameter: Structures]
	\label{def:fifth-parameter}
	As the fifth parameter of a LFOC, we choose a certain subcategory $\katsemantics(\metasig)$ of the category $\katstructure(\metasig)$ of all \metasig-structures. 
\end{definition}
\begin{example}[FOL: Semantics]
	\label{ex:FOL:semantics}
	$\katsemantics(\metasig)=\katstructure(\metasig)$ comprises all \metasig-structures \structureU\ given by an arbitrary set \setU\ together with arbitrary subsets \(\semm{\featuremale}^\structureU\subseteq\katset(\{p\},\setU)=\setU^{\{p\}}\) and \(\semm{\featureparent}^\structureU\subseteq\katset(\{p_1,p_2,p_3\},\setU)=\setU^{\{p_1,p_2,p_3\}}\) as well as all homomorphisms between those \metasig-structures.
	
	Any ``terminological interpretation'' \(\calI\) in ALC includes the choice of a non-empty set \(\Delta^\calI\), called ``domain'', an interpretation of each ``concept name'' in \(N_C\) as a subset of \(\Delta^\calI\cong\katset(\{p\},\Delta^\calI)\) and an interpreation of each ``role name'' in \(N_R\) as a subset of \(\Delta^\calI\times\Delta^\calI\cong\katset(\{p_1,p_2\},\Delta^\calI)\). Obviously, this corresponds to our concept of structure.
\end{example}

\begin{example}[Category Theory: Semantics]
\label{ex:category-theory:semantics}
Analogously to Example \ref{ex:FOL:semantics},  $\katsemantics(\metasig)=\katstructure(\metasig)$ comprises all \metasig-structures \structureU\ given by an arbitrary graph \graphU\ together with arbitrary subsets \(\semm{\featureid}^\structureU\subseteq\katgraph(\arity(\featureid),\graphU)\) and \(\semm{\featurecomp}^\structureU\subseteq\katgraph(\arity(\featurecomp),\graphU)\).
That is, we include also structures like ``categories without identities'', ``categories with partial composition'' and so on.
Moreover, $\katsemantics(\metasig)$ includes all homomorphisms between those \metasig-structures. 
\end{example}
\begin{example}[Universal Algebra: Semantics]
\label{ex:universal-algebra:semantics}
$\katsemantics(\metasig)$ contains exactly one \metasig-structure $\structureU=(\carrierU,\setpred^\structureU)$ with \(\carrierU\isdef\graphofcatgen{\katset}\).
We choose only the singleton, containing the ``empty tuple'',  \[\semm{\featurefinal}^\structureU\isdef\{\interpretationa:\arity(\featurefinal)\to\graphofcatgen{\katset}\mid \interpretationa(pv)=\{()\}\,\}
\]
and use only Cartesian products \[\semm{\featureprodn}^\structureU\isdef\{\interpretationa:\arity(\featureprodn)\to\graphofcatgen{\katset}\mid \interpretationa(pv)=\interpretationa(pv_1)\times\ldots\times\interpretationa(pv_n)\}.\] 
Moreover, $\interpretationa:\arity(\featuretermtwo)\to\graphofcatgen{\katset}$ is in $\semm{\featuretermtwo}^\structureU$ iff \(\interpretationa(pv_3)=\interpretationa(pv_1)\times\interpretationa(pv_2)\) and \(\interpretationa(pe_4)=\la\interpretationa(pe_1),\interpretationa(pe_2)\ra;\interpretationa(pe_3)\).
\end{example}
\begin{open}[Multiple semantic universes in DPF?]
\label{open:multiple-semantic-universes-in-dpf}
We should check once if the idea to have more than one structure in $\katsemantics(\metasig)$ may work for DPF. We could, for example, consider the three categories \katset, \katpar\ and \katmult\ with the corresponding embedding functors.
\end{open}
\section{First-order Feature Expressions}
\label{sec:first-order-expressions}

\subsection{Syntax of Feature Expressions}
\label{sec:syntax-feature-expressions}

By a ``feature expression'' we mean something like a ``formula with free variables'' in traditional FOL. We do not consider them as formulas, but rather as ``derived features''. For us, a ``formula'' is, semantically seen, the subject of being ``valid or not valid'' in a given structure, while the semantics of a ``feature expression'', with respect to a given structure, is the ``set of all its ``solutions'', i.e., the set of all valid interpretations of the ``derived feature'' in this structure. We experience this perspective as the most adequate one, when formalizing and working with ``constraints'' in Modeling Driven Engineering. We have not seen ``closed formulas'' playing any role in this application area!  
\begin{definition}[Feature expressions: Syntax]
\label{def:feature-expressions:syntax}
For a footprint $\metasig=(\setpred,\arity)$ over \kattype\ we define inductively the set of all \textbf{(first-order) feature \metasig-expressions $\expression$ with arity \typeX},  $\typeX\arityof\expression$ in symbols, where \typeX\ varies, in parallel, over all the objects in \katvar: 
\begin{enumerate}
	\item Atomic expression: $\typeX\arityof\predP(\substitutiona)$ for any symbol $\predP\in\setpred$ and morphism $\substitutiona:\a\predP\to\typeX$ in \kattype.
	\item Everything: $\typeX\arityof\top$ for any object \typeX\ in \kattype.
	\item Void: $\typeX\arityof\bot$ for any object \typeX\ in \kattype.
	\item Conjunction: $\typeX\arityof(\expressiona\wedge\expressionb)$ for any expressions $\typeX\arityof\expressiona$ and $\typeX\arityof\expressionb$.
	\item Disjunction: $\typeX\arityof(\expressiona\vee\expressionb)$ for any expressions $\typeX\arityof\expressiona$ and $\typeX\arityof\expressionb$.
	\item Negation: $\typeX\arityof\neg\expression$ for any expression $\typeX\arityof\expression$.
	\item Conditional existential quantification:
	 $\typeX\arityof\condexistential{\expressiona}{\typemorphisma}{\typeY}{\expressionb}$ for any expressions $\typeX\arityof\expressiona$, $\typeY\arityof\expressionb$ and any morphism $\typemorphisma:\typeX\to\typeY$ in \kattype.
	\item Conditional universal quantification:   $\typeX\arityof\conduniversal{\expressiona}{\typemorphisma}{\typeY}{\expressionb}$  for any expressions $\typeX\arityof\expressiona$, $\typeY\arityof\expressionb$ and any morphism $\typemorphisma:\typeX\to\typeY$ in \kattype.
\end{enumerate}
\end{definition}

We included the case \textit{Void} after we became acquainted with Description Logics. 

\begin{remark}[Arity]
\label{rem:arity}
CALCO-reviewer 3 proposed to use the phrase "expressions with variables in \typeX" instead of "expressions with arity \typeX". We are not following this proposal for two reasons: (1) We can not talk about "variables in \typeX" since \typeX\ is just an object in an arbitrary base category. (2)  We consider feature expressions as "derived features" thus they should have an arity in the same way as the "basic features" in a footprint do have an arity. 
\end{remark}
\begin{remark}[Notation for expressions]
\label{rem:notation-expressions}
In traditional FOL, \typeX\ and \typeY\ are sets of variables and, instead of arbitrary maps $\typemorphisma:\typeX\to\typeY$, only inclusion maps $in_{\typeX,\typeY}:\typeX\hookrightarrow\typeY$ are considered in quantifications. Moreover, only the quantified variables  \(\typeY\setminus\typeX\) are recorded while \typeY\ has to be (re)constructed as the union \(\typeX\cup(\typeY\setminus\typeX)\). In other words, our \typeY\ lists all (!) variables that are allowed to appear as ``free variables'' in \expressionb! We record the whole \typeY\ for three reasons: (1) Even in arbitrary pre-sheaf topoi, we do not have complements. (2) We quantify actually over morphisms with source \typeY\ when we define the semantics of quantifications (compare Definition \ref{def:feature-expressions:semantics}) (3) In contrast to traditional FOL, $\typemorphisma:\typeX\to\typeY$ is allowed to be non-monic.

%

We allow non-monic morphims $\typemorphisma:\typeX\to\typeY$, even in  case \(\katvar\sqsubseteq\katset\), to express identifications without ``polluting'' our logic with equational symbols. Anyway, there seems to be no way to encode identifications by means of equational symbols if we go beyond presheaf topoi!?

In case $\typeX=\typeY$ and $\typemorphisma=id_\typeX$, both quantifications become obsolete and we get a kind of ``propositional implication'' that we can just write as  $\typeX\arityof\propimplication{\expressiona}{\expressionb}$.

In case $\expressiona=\top$, quantification becomes un-conditional, and we can just write $\typeX\arityof\existsvia{\typemorphisma}{\typeY}{\expressionb}$ and $\typeX\arityof\forallvia{\typemorphisma}{\typeY}{\expressionb}$ for the corresponding existential quantification $\typeX\arityof\condexistential{\top}{\typemorphisma}{\typeY}{\expressionb}$ and universal quantification $\typeX\arityof\conduniversal{\top}{\typemorphisma}{\typeY}{\expressionb}$, respectively.

If our base category \katbase\ is a pre-sheaf topos, like \katset\ or \katgraph, for example, inclusions of sets give us ``inclusion morphisms'' at hand. In case,  $\typemorphisma=in_{\typeX,\typeY}:\typeX\hookrightarrow\typeY$ is such an ``inclusion morphism'', we will drop the subscript $\typemorphisma$ (see Examples \ref{ex:FOL:expressions}, \ref{ex:category-theory:expressions} and \ref{ex:universal-algebra:expressions}).


\end{remark}

\begin{remark}[Everything and Void]
\label{rem:everything-and-void}
$\top$ and \(\bot\) are  not ``logical constants'', but  ``special  feature symbols'', inbuilt in any LFOC (analogously to the equation symbol in traditional Universal Algebra). To make this statement fully precise, we have to assume that \katbase\ has an initial object \initialobject\  that is included in \katvar. \initialobject\ is then the  arity of $\top$ and \(\bot\), while the fixed semantics for any carrier \carrierU\ is given by the two subsets of the singleton \(\katbase(\initialobject,\carrierU)=\{{!}_\carrierU:\initialobject\to\carrierU\}\), namely \(\semm{\bot}^\carrierU=\emptyset \) and \(\semm{\top}^\carrierU=\{{!}_\carrierU\}\). Consequently, we could use the same notation as for atomic expressions, namely $\typeX\arityof\top({!}_\typeX)$ and $\typeX\arityof\bot({!}_\typeX)$ where ${!}_\typeX:\initialobject\to\typeX$ is the unique initial morphism into \typeX. 
\end{remark}
\begin{remark}[Closed formulas]
\label{rem:closed-formulas}
In the presence of an initial object \initialobject, feature expressions $\initialobject\arityof\expression$  correspond to what is usually called a ``closed formula'' in traditional FOL. 
Note, that both \textit{Conditional quantifications} will generate ``closed formulas'' only in case \(\typeX=\initialobject\) where \(\substitutiona={!}_\typeY:\initialobject\to\typeY\) is the only choice for \substitutiona, in this case.
\end{remark}
\begin{remark}[Substitutions]
\label{rem:substitutions}
For any \(\typeX\in\katvar\) we consider the set $Feat(\metasig,\typeX)$ of all feature \metasig-expressions with fixed arity \typeX, i.e.,  $\expression\in Feat(\metasig,\typeX)$ if and only if the statement  $(\typeX\arityof\expression)$ can be derived with the rules in Definition \ref{def:feature-expressions:syntax}. 

Note, that the sets $Feat(\metasig,\typeX)$ are not disjoint. Each of the sets contains, for example, the special feature symbols \(\bot\) and \(\top\). Therefore, we work in Definition \ref{def:feature-expressions:syntax} with the notation $\typeX\arityof\;\ldots$.

It is desirable that any ``variable substitution'' $\typemorphismb:\typeX\to\typeZ$ in \kattype\ induces a ``substitution map'' from  $Feat(\metasig,\typeX)$ into $Feat(\metasig,\typeZ)$. Due to the two \textit{Conditional quantifications}, however, we can only define those maps if \katbase\ has pushouts for (certain) spans in \katvar\ and if we can choose pushouts,  in such a way, that \katvar\ is closed under them. Even then, the assignments \(\typeX\mapsto Feat(\metasig,\typeX)\) may be extended only to a pseudo-functor from \katvar\ into \katset, since chosen pushouts are, in general, not compositional. 

We do not need ``substitution maps'' for the general definition of LFOC's in this paper. They will be, however, essential for a LFOC to be useful in applications (compare the following  examples). 
A systematic investigation of the additional conditions for the categories \katbase\ and \kattype\ that ensure the availability of ``application-friendly substitution maps'' is another topic for future research.

If we extend once the framework by operations, the arrows \substitutiona\ in atomic expressions $\typeX\arityof\predP(\substitutiona)$ will be substitutions, i.e., Kleisli morphisms substituting variables by terms (compare \cite{WolterDK18}) Every Kleisli morphism should induce then also a map from  $Feat(\metasig,\typeX)$ into $Feat(\metasig,\typeZ)$!?
\end{remark}

\begin{example}[FOL: Expressions]
\label{ex:FOL:expressions}
If we consider \featuremale\ and \featureparent\ as traditional unary resp.\ tertiary predicate symbols, we can construct an ``open formula'' like \\
\hspace*{8em}\((\exists x_1 \exists x_2 \exists x_3:\featureparent(p,x_2,x_3)\wedge\featureparent(x_1,x_2,x_3))
\) \\
with \(p\) the only free variable. We consider this as a derived unary property ``being sibling of someone''. 
We could introduce an auxiliary symbol \featuresibling\ and use \(\featuresibling(p)\) as a shorthand (macro) for this open formula. The conjunction \((\featuremale(p)\wedge\featuresibling(p))\) would then represent a derived unary property ``being brother of someone''.


We consider the tuple \((p,x_2,x_3)\) just as a convenient notation of the  assignments \((p_1\mapsto p,p_2\mapsto x_2,p_3\mapsto x_3)\) defining a map \substitutiona\ from the arity \(\{p_1,p_2,p_3\}\) of \featureparent\ into the set \(\{p,x_1,x_2,x_3\}\) of variables (compare \cite{WolterDK18}). Relying on the conventions in Remark \ref{rem:notation-expressions}, we can then represent the derived property ``being sibling of someone'' by the feature expression \\
\hspace*{5em} \(\{p\}\arityof\existsvia{}{\{p,x_1,x_2,x_3\}}{(\featureparent(p,x_2,x_3)\wedge\featureparent(x_1,x_2,x_3))} \).

We would like to use this feature expression to define an auxiliary feature symbol \featuresibling\ with arity \(\{p\}\). To ensure, that then any feature expression $\typeX\arityof\expression$, containing \featuresibling, can be expanded into an equivalent feature expression $\typeX\arityof\expression'$, containing only the original feature symbols \featuremale\ and \featureparent, we need ``substitution maps'', as discussed in Remark \ref{rem:substitutions}.

A property ``having at most one couple of parents'' is represented by the feature expression\\
 \(\{p\}\hspace{0pt}\arityof\hspace{0pt}\forallvia{}{\{p,x_1,x_2,x_3,x_4\}}{\condexistential{(\featureparent(p,x_1,x_2)\wedge\featureparent(p,x_3,x_4))}{\substitutiona}{\{x_5,x_6,x_7\}}{\top}} \).\\
with \(\substitutiona:\{p,x_1,x_2,x_3,x_4\}\to\{x_5,x_6,x_7\}\) defined by \(p\mapsto x_5;x_1,x_3\mapsto x_6; x_2,x_4\mapsto x_7\).

ALC focuses on ``derived concepts'', i.e., feature expressions with \typeX\ a singleton. To describe, however, all ``derived concepts'' as feature expressions, we have to use arbitrary finite sets of variables.
To simulate the restriction on ``derived concepts'', we have to vary rules (7) and (8). Using our notational conventions, the ALC construct ``universal restriction \(\forall\featuregen{R}.\featuregen{C}\) for any role \(\featuregen{R}\in N_R\), any (derived) concept \featuregen{C}'', can be described as follows: For any role \(\featuregen{R}\) in \(N_R\), any expression \(\{p_1\}\arityof\featuregen{C}\) and any variables \(x_1,x_2\), not appearing in \featuregen{C}, we have:\;
\( \{x_1\}\arityof\forallvia{}{\{x_1,x_2\}}{\featuregen{R}(x_1,x_2)\to\featuregen{C}_\substitutiona(x_2)} \) 
where \(\substitutiona:\{p_1\}\to\{x_1,x_2\}\) is given by \(\substitutiona(p_1)=x_2\) and the expression \(\{x_1,x_2\}\arityof\featuregen{C}_\substitutiona(x_2)\) is obtained by substituting each occurrence of \(p_1\) in \featuregen{C} by \(x_2\) and by extending each variable declaration \typeY\ in \featuregen{C} by the ``fresh variable'' \(x_1\).
Analogously, the ALC construct ``the existential restriction \(\exists\featuregen{R}.\featuregen{C}\) of a concept \featuregen{C} by a role \(\featuregen{R}\in N_R\)'' can be described by existential quantification: For any role \(\featuregen{R}\) in \(N_R\), any expression \(\{p_1\}\arityof\featuregen{C}\) and any variables \(x_1,x_2\), not appearing in \featuregen{C}, we have:\;
\( \{x_1\}\arityof\existsvia{}{\{x_1,x_2\}}{\featuregen{R}(x_1,x_2)\wedge\featuregen{C}_\substitutiona(x_2)} \).

CALCO-reviewer 3 confirmed our translation of ACL into LFOC: The feature expressions that you get
for the ALC restrictions are precisely those that you get for
their translation to first-order logic along the standard
encoding of ALC in FOL. This is no surprise, but if you did this
independently, you can regard this as a confirmation of your
intuition.
\end{example}
\begin{example}[Category Theory: Expressions]
\label{ex:category-theory:expressions}
( We are sorry, put we will postpone a diagrammatic presentation of this example to the next version of the paper.)

The existence of an identity morphism for a certain single vertex can be stated by means of the feature expression \(\typeX\arityof\existsvia{}{\arity(\featureid)}{\featureid(pe)}\), where graph \typeX\ consists only of a vertex \(pv\). ``\((pe)\)'' is considered as a notation of the assignment \((pe\mapsto pe)\) representing uniquely the graph homomorphism \(id_{\arity(\featureid)}\). To express the uniqueness of the  identity morphism for a certain vertex, we do have the expression\\
\hspace*{6em}\(\typeX\arityof\forallvia{}{\typeY}{\condexistential{\featureid(pe_1)\wedge\featureid(pe_2)}{\substitutiona}{\arity(\featureid)}{\top}} \),\\
where \typeY\ is a graph with one node \(pv\) and two loops \(pe_1,pe_2\), and \(\substitutiona:\typeY\to\arity(\featureid)\) is given by the assignments \((pv\mapsto pv;pe_1,pe_2\mapsto pe)\). Analogously, we can express existence and uniqueness of composition of a certain pair of edges. At the end, we can express any properties, like monic, jointly monic, commutative (co)cone, existence and uniqueness of mediating morphisms, limit cone, colimit cone and so on, by means of feature expressions. 

We can also hide auxiliary items. The property ``commutative square'', e.g., is given by  \(\;\typeX\arityof\existsvia{}{\typeY}{\featurecomp(pe_1,pe_3,pe_5)\wedge\featurecomp(pe_2,pe_4,pe_5)}\), where the arity \typeX\ is a square with edges \(pv_1\stackrel{pe_1}{\to}pv_2\stackrel{pe_3}{\to}pv_4\) and \(pv_1\stackrel{pe_2}{\to}pv_3\stackrel{pe_4}{\to}pv_4\) and \typeY\ extends \typeX\ by an arrow \(pv_1\stackrel{pe_5}{\to}pv_4\). 
\end{example}
\begin{example}[Universal Algebra: Expressions]
\label{ex:universal-algebra:expressions}
In this case, feature expressions allow us to formulate (not fully precisely) properties like ``being a constant'', ``being a Cartesian product of'', ``being a variable (projection)'', ``being an operation/term''. For a more precise characterization, we have to work with ``typed arities'' as indicated in Example \ref{ex:universal-algebra:footprint}.
Analogously to the ``commutative square'' expression in Example \ref{ex:category-theory:expressions}, we can also hide ``subterms''.

Note, that we have the property ``two terms are equal'' at hand by means of the simple feature expression \(\typeX\arityof\existsvia{\substitutiona}{\typeY}{\top} \), with \typeX\ given by two parallel edges \(pe_1,pe_2:pv_1\to pv_2\) and \typeY\ consisting of one edge \(pe:pv_1\to pv_2\). \substitutiona\ is the only graph homomorphism from \typeX\ into \typeY. 
\end{example}

\subsection{Semantics of Feature Expressions}
\label{sec:semantics-of-feature-expressions}
Relying on Definition \ref{def:structures} of the semantics of feature symbols, we are going to define the semantics of a feature expression $\typeX\arityof\expression$ in a \metasig-structure  $\structureU =(\carrierU,\setpred^\structureU)$ as the set $\semm{\expression}_\typeX^\structureU$ of all \textbf{''solutions'' (valid interpretations)} of $\typeX\arityof\expression$ in \structureU. This semantics is a restriction of the semantics of \typeX\ relative to the carrier \carrierU, 
i.e., \(\semm{\expression}_\typeX^\structureU\subseteq\semm{\typeX}^\carrierU=\katbase(\typeX,\carrierU)\).

For \textbf{interpretations} $\interpretationa:\typeX\to\carrierU$, we will use, instead of $\interpretationa\in\semm{\expression}_\typeX^\structureU$, also the equivalent, but more traditional, notation \solution{\interpretationa}{\structureU}{\typeX}{\expression} .

Given a morphism \flar{\typemorphisma}{\typeX}{\typeY} in \kattype, we say that an interpretation \flar{\interpretationb}{\typeY}{\carrierU}  is an \textbf{extension} of an \textbf{interpretation} \flar{\interpretationa}{\typeX}{\carrierU} \textbf{via} \typemorphisma\ if, and only if, $\,\typemorphisma;\interpretationb=\interpretationa$.
$$\xymatrix{
	\typeX \ar[rr]^\typemorphisma\ar[dr]_(.4)\interpretationa
	& {}\ar@{}[d]|(.4)=
	& \typeY  \ar[dl]^(.4)\interpretationb
	\\
	& \carrierU
}$$
\begin{definition}[Feature expressions: Semantics]
\label{def:feature-expressions:semantics}
The semantics of feature \metasig-expressions in an arbitrary, but fixed, \metasig-structure  $\structureU =(\carrierU,\setpred^\structureU)$ is defined inductively:
\begin{enumerate}
\item Atomic expressions: $\interpretationa\in\semm{\predP(\substitutiona)}_\typeX^\structureU$ \; iff \;$\substitutiona;\interpretationa\in\semm{\predP}^\structureU$ 
$$\xymatrix{
\a\predP \ar[rr]^\substitutiona\ar[dr]_(.4){\substitutiona;\interpretationa}
& {}\ar@{}[d]|(.4)=
& \typeX  \ar[dl]^(.4)\interpretationa
\\
& \carrierU
}$$
\item Everything: $\semm{\top}_\typeX^\structureU\isdef\semm{\typeX}^\carrierU=\katbase(\typeX,\carrierU)$
\item Void: $\semm{\bot}_\typeX^\structureU\isdef\emptyset$
\item Conjunction: $\semm{\expressiona\wedge\expressionb}_\typeX^\structureU
\isdef\semm{\expressiona}_\typeX^\structureU\cap\semm{\expressionb}_\typeX^\structureU$
\item Disjunction:  $\semm{\expressiona\vee\expressionb}_\typeX^\structureU
\isdef\semm{\expressiona}_\typeX^\structureU\cup\semm{\expressionb}_\typeX^\structureU$
\item Negation:  $\semm{\neg\expression}_\typeX^\structureU
\isdef\katbase(\typeX,\carrierU)\setminus\semm{\expression}_\typeX^\structureU$
\item Conditional existential quantification: $\interpretationa\in\semm{\condexistential{\expressiona}{\typemorphisma}{\typeY}{\expressionb}}_\typeX^\structureU$  \; iff \; \\ $\interpretationa\in\semm{\expressiona}_\typeX^\structureU$ implies that there exists an extension \interpretationb\ of \interpretationa\ via \typemorphisma\ such that $\interpretationb\in\semm{\expressionb}_\typeY^\structureU$.
$$\xymatrix{
\typeX \ar[rr]^\substitutiona\ar[dr]_(.3){\solution{\interpretationa}{\structureU}{\typeX}{\expressiona}}
& {}\ar@{}[d]|(.4)=
& \typeY  \ar[dl]^(.3){\exists\solution{\interpretationb}{\structureU}{\typeY}{\expressionb}}
\\
& \carrierU
}$$
\item Conditional universal quantification:  $\interpretationa\in\semm{\conduniversal{\expressiona}{\typemorphisma}{\typeY}{\expressionb}}_\typeX^\structureU$  \; iff \; \\ $\interpretationa\in\semm{\expressiona}_\typeX^\structureU$ implies that for all extensions \interpretationb\ of \interpretationa\ via \typemorphisma\ it holds that $\interpretationb\in\semm{\expressionb}_\typeY^\structureU$.
$$\xymatrix{
\typeX \ar[rr]^\substitutiona\ar[dr]_(.3){\solution{\interpretationa}{\structureU}{\typeX}{\expressiona}}
& {}\ar@{}[d]|(.4)=
& \typeY  \ar[dl]^(.3){\forall\solution{\interpretationb}{\structureU}{\typeY}{\expressionb}}
\\
& \carrierU
}$$
\end{enumerate}
\end{definition}

\begin{remark}[Feature expressions: Semantics]
\label{rem:feature-expressions:semantics}
Every feature symbol \predP\ in \setpred\ reappears as the  \metasig-expression $\a\predP\arityof\predP(id_{\a \predP})$ and Definition \ref{def:feature-expressions:semantics} ensures $\semm{\predP(id_{\a\predP})}_{\a\predP}^\structureU=\semm{\predP}^\structureU$.

As usual, any conditional quantification becomes valid, if the condition does not hold, i.e., if 
$\interpretationa\notin\semm{\expressiona}_\typeX^\structureU$. This gives us ``guarded constraints'' at hand that need only to be checked if  the ``guard'' \expressiona\ becomes valid. This is quite relevant for practical applications.

If 
$\interpretationa\in\semm{\expressiona}_\typeX^\structureU$, the conditional universal quantification $\typeX\arityof\conduniversal{\expressiona}{\typemorphisma}{\typeY}{\expressionb}$ is trivially valid if there is no extension \interpretationb\ of \interpretationa\ at all, while the corresponding conditional existential quantification is not valid, in this case. 
\end{remark}

\begin{open}[Constructive feature expressions]
\label{open:constructive-feature-expressions}
A feature expression $\typeX\arityof\expressiona$ may be called \textbf{constructive} if it does not contain Negation or Conditional universal quantification, respectively. Our conjecture is that the semantics of constructive predicate expressions is preserved by homomorphisms between \metasig-structures.
\end{open}

\section{Institutions of First-Order Constraints}
\label{sec:institutions-of-contraints}
Generalizing concepts like ``set of generators'' in Group Theory, ``underlying graph of a sketch'' in Category Theory, ``set of individual names'' in Description Logics and ``underlying graph of a model'' in Software Engineering, we coin in this section the concept ``context''. Further, we introduce ``constraints'' in generalizing the corresponding concepts ``defining relation'' in Group Theory, ``diagram in a sketch'' in Category Theory, ``concept/role assertion'' in Description Logic and ``constraint'' in Software Engineering.  We use the concept of institution  \cite{Dia08,GB92} as a methodological guideline to develop and present corresponding LFOC's.

\subsection{Category of Contexts and Sentence Functor}
\label{sec:contexts-and-sentence-functor}
As \textbf{''signatures''}, in the sense of institutions, we introduce ``contexts''.
\begin{definition}[Sixth parameter: Contexts]
	\label{def:sixth-parameter}
	As the sixth parameter of a LFOC,  we choose a subcategory \katcontext\ of \katbase\ such that \kattype\ is a subcategory of \katcontext. The objects in \katcontext\ are called  \textbf{''contexts''} while we refer to the morphisms in \katcontext\ as \textbf{''context morphisms''}. 
\end{definition}
\begin{remark}[Variables vs. context]
\label{rem:variables-vs-context}
Introducing ``contexts'', we establish a technological layer between ``pure syntax'' (variable declarations) and ``pure semantics'' (carriers of structures). We may use the term \textbf{pseudo syntax}. We prefer to consider variable declarations as something finite or enumerable while contexts can be arbitrary. If we are interested in completeness proofs and corresponding free structures, for example, we may have contexts that are or become carriers of structures. Or, to say it in the light of Remark \ref{rem:signature-extensions-vs-generators}: We refuse ad-hoc extensions of footprints (signatures) by carriers of structures.

We perceive the inclusion $\kattype\sqsubseteq\katcontext$ as a ``change of roles'': Variables are essentially syntactic items but can also serve as ``generators of structures'', like groups and (term) algebras, for example.
\qed
\end{remark}
As \textbf{''sentences''}, in the sense of institutions, we consider ``constraints''.
\begin{definition}[Constraint]
\label{def:constraint}
A \textbf{\metasig-constraint} \constraint{\typeX}{\expression}{\bindinga} on a context $\contextK\in\katcontext$ is given by a feature \metasig-expression $\typeX\arityof\expression$ and a \textbf{binding morphism} $\flar{\bindinga}{\typeX}{\contextK}$ in \katcontext. 

By $\functorCtr(\contextK)$ we denote the set of all \metasig-constraints on \contextK.
\end{definition}
\begin{remark}[Expression vs. constraint]
\label{rem:expression-vs-constraint}
We distinguish between expressions and constraints for, at least, three reasons: (1) This corresponds to the situation in sketches where we distinguish between a property and its arity, on one side, and a diagram, on the other side. (2) We consider expressions as pure syntactic entities. (3) Using constraints, we can encapsulate the construction of syntactic entities and can realize ``changes of bases'' in any category \katbase\ by simple composition. This trick we adapt from  \cite{GB92} where it is used for ``initial/free constraints''.
\qed
\end{remark}

Any morphism \flar{\contextmorphisma}{\contextK}{\contextG} in \katcontext\ induces a map \flar{\functorCtr(\contextmorphisma)}{\functorCtr(\contextK)}{\functorCtr(\contextG)} defined by simple post-composition: $\functorCtr(\contextmorphisma)\constraint{\typeX}{\expression}{\bindinga}\isdef\constraint{\typeX}{\expression}{\bindinga;\contextmorphisma}$ for all constraints \constraint{\typeX}{\expression}{\bindinga} on \contextK.
$$\xymatrix{
&  \typeX\arityof\expression\ar[dl]_(.6){\bindinga}^(.4){}="p1"\ar[dr]^(.6){\bindinga;\contextmorphisma}_(.4){}="p2"
\\
\contextK\ar[rr]_\contextmorphisma
&& \contextG
\ar@/_1.2pc/@{|->} "p1";"p2"
}$$

It is easy to show, that the assignments $\contextK\mapsto\functorCtr(\contextK)$ and $\contextmorphisma\mapsto\functorCtr(\contextmorphisma)$ provide a functor \flar{\functorCtr}{\katcontext}{\katset}. This is our \textbf{''sentence functor''}, in the sense of institutions.

\subsection{Model functor}
\label{sec:model-functor}
\textbf{''Models''}, in the sense of institutions, are interpretations of contexts in structures.
\begin{definition}[Context interpretations]
\label{def:context-interpretations}
An \textbf{interpretation \contextinterpretation{\interpretationa}{\structureU} of a context} $\contextK\in \katcontext$ is given by a \metasig-structure $\structureU=(\carrierU,\setpred^\structureU)$ in $\katsemantics(\metasig)$ and a morphism \flar{\interpretationa}{\contextK}{\carrierU} in \katbase.

A morphism \flar{\varsigma}{\contextinterpretation{\interpretationa}{\structureU}}{\contextinterpretation{\interpretationb}{\structureV}} between two interpretations of \contextK\ is given by a morphism \flar{\varsigma}{\structureU}{\structureV} in $\katsemantics(\metasig)$ such that $\interpretationa;\varsigma=\interpretationb$ for the underlying morphism \flar{\varsigma}{\carrierU}{\carrierV} in \katcarrier.
$$\xymatrix{
	&  \contextK\ar[dl]_{\interpretationa}\ar[dr]^{\interpretationb}\ar@{}[d]|(.5)=
	\\
	\carrierU\ar[rr]^\varsigma
	&& \carrierV
}$$
For any context \contextK\ in \katcontext\ we denote by $\functorInt(\contextK)$ the category of all interpretations of \contextK\ and all morphisms between them. 
\end{definition}
Note, that $\functorInt(\contextK)$ is a discrete category (set) in case $\katsemantics(\metasig)$ consists of exactly one object and its identity morphism, as in case of our Universal Algebra example (see Example \ref{ex:universal-algebra:semantics}).
\begin{open}[Natural transformations?]
\label{open:natural-transfromations}
We present in this paper an abstract and general definition of LFOC's that fits all the applications, we do have in mind. Therefore we are not assuming any structure on the hom-sets \(\katbase(\contextK,\carrierU)\). 

There are, however, cases like the Universal Algebra example, for instance, where  \(\katbase(\contextK,\carrierU)\) is actually a category with natural transformations as morphisms. For those special cases we can vary Definition \ref{def:context-interpretations} in such a way that a morphism between the two interpretations of \contextK\ is given by a morphism \flar{\varsigma}{\carrierU}{\carrierV} in \katcarrier\ and a natural transformation in \(\katbase(\contextK,\carrierV)\) from  $\interpretationa;\varsigma$ to $\interpretationb$.  We are convinced that all the following constructions and results can be transferred, more or less, straightforwardly to this extended version of morphisms between interpretations. For the moment, we have, however, to let this extension for special LFOC's open as another topic of future research.   
\qed
\end{open}

Any context morphism \flar{\contextmorphisma}{\contextK}{\contextG} induces a functor \flar{\functorInt(\contextmorphisma)}{\functorInt(\contextG)}{\functorInt(\contextK)} defined by simple pre-composition: $\functorInt(\contextmorphisma)\contextinterpretation{\interpretationb}{\structureV}\isdef\contextinterpretation{\contextmorphisma;\interpretationb}{\structureV}$ for all interpretations \contextinterpretation{\interpretationb}{\structureV} of \contextG, and for any morphism \flar{\varsigma}{\contextinterpretation{\interpretationa}{\structureU}}{\contextinterpretation{\interpretationb}{\structureV}} between two interpretations of \contextG\ the same underlying  morphism \flar{\varsigma}{\carrierU}{\carrierV} in \katcarrier\ establishes a morphism \flar{\functorInt(\contextmorphisma)(\varsigma)\isdef\varsigma}{\contextinterpretation{\contextmorphisma;\interpretationa}{\structureU}}{\contextinterpretation{\contextmorphisma;\interpretationb}{\structureV}} between the corresponding two interpretations of \contextK.
$$\xymatrix{
	\contextK\ar[rr]^\contextmorphisma\ar[dr]_{\contextmorphisma;\interpretationb}^(.5){}="p2"
&& \contextG\ar[dl]^{\interpretationb}_(.5){}="p1"
&& \contextK\ar[rr]^\contextmorphisma\ar[d]_{\contextmorphisma;\interpretationa}\ar[drr]_(.35){\contextmorphisma;\interpretationb}
&& \contextG\ar[d]^{\interpretationb}\ar[dll]^(.35){\interpretationa}
	\\
&  \carrierV
&&&\carrierU\ar[rr]^\varsigma
&&  \carrierV
\ar@/_.8pc/@{|->} "p1";"p2"
}$$
It is straightforward to validate, that the assignments $\contextK\mapsto\functorInt(\contextK)$ and $\contextmorphisma\mapsto\functorInt(\contextmorphisma)$ define a functor \flar{\functorInt}{\katcontext^{op}}{\katcat}. This is our \textbf{''model functor''}, in the sense of institutions.

\subsection{Satisfaction Relation and Satisfaction Condition}
\label{sec:satisfaction-relation-and-satisfaction-condition}
The last two steps, in establishing an institution, are the definition of satisfaction relations and the proof of the so-called satisfaction condition. The satisfaction relations are simply given by the semantics of features expressions, as described in Definition \ref{def:feature-expressions:semantics}, and the composition of morphisms in \katbase.
\begin{definition}[Satisfaction relation]
\label{def:satisfaction-relation}
For any context $\contextK\in\katcontext$, any constraint \constraint{\typeX}{\expression}{\bindinga} on \contextK\ and any interpretation \contextinterpretation{\interpretationa}{\structureU} of context \contextK\ we define:
\begin{equation}\label{equ:satisfaction-relation}
\contextinterpretation{\interpretationa}{\structureU} \models_\contextK \constraint{\typeX}{\expression}{\bindinga} \quad \mbox{iff} \quad \solution{\bindinga;\interpretationa}{\structureU}{\typeX}{\expression} \quad (\,\mbox{i.e.} \quad
 \bindinga;\interpretationa\in\semm{\expression}_\typeX^\structureU\;)
\end{equation}
$$\xymatrix{
	& 
	\contextK\ar[dl]_{\interpretationa}
	\\   
	\carrierU
	&&
	\typeX\arityof\expression\ar[ul]_{\bindinga}\ar[ll]^{\bindinga;\interpretationa}
}$$
\end{definition}
Having developed everything in a systematic modular way, satisfaction condition is ``for free''.
\begin{corollary}[Satisfaction condition]
\label{coro:satisfaction-condition}
For any morphism \flar{\contextmorphisma}{\contextK}{\contextG} in \katcontext, any constraint \constraint{\typeX}{\expression}{\bindinga} on \contextK, and any interpretation  \contextinterpretation{\interpretationb}{\structureU} of context \contextG\ we have:
\begin{equation}\label{equ:satisfaction-condition}
\functorInt(\contextmorphisma)\contextinterpretation{\interpretationb}{\structureU}
\models_\contextK \constraint{\typeX}{\expression}{\bindinga}  \quad \mbox{iff} \quad \contextinterpretation{\interpretationb}{\structureU}
\models_\contextG\functorCtr(\contextmorphisma)\constraint{\typeX}{\expression}{\bindinga}.
\end{equation}
$$\xymatrix{
\contextK\ar[dd]_\contextmorphisma
& \contextinterpretation{\contextmorphisma;\interpretationb}{\structureU}\ar@{}[r]|(.45){\models_\contextK}
& \constraint{\typeX}{\expression}{\bindinga}\ar@{|->}[dd]^{\functorCtr(\contextmorphisma)}
&&
\contextK\ar[dd]^\contextmorphisma\ar[dl]_{\contextmorphisma;\interpretationb}
\\
&&& 
\carrierU 
&&
\typeX\arityof\expression\ar[ul]_{\bindinga}\ar[dl]^{\bindinga;\contextmorphisma}
\\
\contextG
& \contextinterpretation{\interpretationb}{\structureU}\ar@{}[r]|(.45){\models_\contextG}
\ar@{|->}[uu]^{\functorInt(\contextmorphisma)}
& 
\constraint{\typeX}{\expression}{\bindinga;\contextmorphisma}
&&
\contextG\ar[ul]^{\interpretationb}
}$$
\end{corollary}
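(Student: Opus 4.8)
The plan is to unfold both sides of the claimed equivalence using the definitions of the sentence functor \functorCtr, the model functor \functorInt, and the satisfaction relation of Definition \ref{def:satisfaction-relation}, and then to observe that associativity of composition in \katbase\ collapses them into one and the same statement. No induction on the structure of the feature expression \expression\ will be needed, since \contextmorphisma\ affects neither \expression\ nor its arity \typeX; the set $\semm{\expression}_\typeX^\structureU$ is therefore literally the same on both sides.

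First I would rewrite the left-hand side. By the definition of \functorInt\ as pre-composition, $\functorInt(\contextmorphisma)\contextinterpretation{\interpretationb}{\structureU}$ is the interpretation $\contextinterpretation{\contextmorphisma;\interpretationb}{\structureU}$ of the context \contextK, with carrier morphism $\contextmorphisma;\interpretationb:\contextK\to\carrierU$. Feeding this into the satisfaction relation, the left-hand assertion $\functorInt(\contextmorphisma)\contextinterpretation{\interpretationb}{\structureU}\models_\contextK\constraint{\typeX}{\expression}{\bindinga}$ unfolds to the single membership statement $\bindinga;(\contextmorphisma;\interpretationb)\in\semm{\expression}_\typeX^\structureU$.

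Next I would rewrite the right-hand side symmetrically. By the definition of \functorCtr\ as post-composition, $\functorCtr(\contextmorphisma)\constraint{\typeX}{\expression}{\bindinga}$ is the constraint $\constraint{\typeX}{\expression}{\bindinga;\contextmorphisma}$ on the context \contextG. Applying the satisfaction relation again, the right-hand assertion $\contextinterpretation{\interpretationb}{\structureU}\models_\contextG\functorCtr(\contextmorphisma)\constraint{\typeX}{\expression}{\bindinga}$ unfolds to $(\bindinga;\contextmorphisma);\interpretationb\in\semm{\expression}_\typeX^\structureU$. Since $\bindinga:\typeX\to\contextK$, $\contextmorphisma:\contextK\to\contextG$ and $\interpretationb:\contextG\to\carrierU$ are composable in \katbase, associativity yields $\bindinga;(\contextmorphisma;\interpretationb)=(\bindinga;\contextmorphisma);\interpretationb$, so both conditions reduce to membership of the very same morphism $\typeX\to\carrierU$ in $\semm{\expression}_\typeX^\structureU$, and the equivalence follows.

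I do not expect any genuine obstacle. The entire content has been front-loaded into the modular design: \functorCtr\ inserts \contextmorphisma\ by post-composition after the binding \bindinga, whereas \functorInt\ inserts the same \contextmorphisma\ by pre-composition before the interpretation \interpretationb, and in either reading the total composite $\typeX\to\carrierU$ is $\bindinga;\contextmorphisma;\interpretationb$. Thus the only thing to check is that the two functors introduce \contextmorphisma\ on the two sides of the context \contextK\ and that associativity identifies the results. This is precisely the ``change of base by simple composition'' trick highlighted in Remark \ref{rem:expression-vs-constraint}, and it is exactly why the satisfaction condition comes ``for free''.
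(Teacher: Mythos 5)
Your proposal is correct and takes essentially the same route as the paper's own proof: the paper simply notes that the definitions of \functorInt\ (pre-composition) and \functorCtr\ (post-composition) yield a commutative diagram, so that both sides of the satisfaction condition amount, via Definition \ref{def:satisfaction-relation}, to membership of the single composite $\bindinga;\contextmorphisma;\interpretationb$ in $\semm{\expression}_\typeX^\structureU$. Your explicit unfolding with associativity of composition in \katbase\ is just that diagram argument spelled out, including the correct observation that no induction on \expression\ is needed.
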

\begin{proof}
Due to the definition of the functors \flar{\functorInt}{\katcontext^{op}}{\katcat} and \flar{\functorCtr}{\katcontext}{\katset}, we obtain the commutative diagram, above on the right, thus condition \ref{equ:satisfaction-condition} follows immediately from Definition \ref{def:satisfaction-relation}.
\end{proof}
\begin{remark}[Satisfaction Condition]
\label{rem:satisfaction-condition}
There is a certain similarity between the nearly trivial proof of the satisfaction condition in Corollary \ref{coro:satisfaction-condition} and the systematic and detailed proofs of the satisfaction condition for four formalisms in \cite{WKWC94}. Introducing in \cite{WKWC94} the concept of ``corresponding assignments'', those proofs became uniformly easy and straightforward.
\end{remark}

Summarizing all our definitions and results, we can state the main result of the paper:
\begin{theorem}[Institution of Constraints]
\label{theo:institution}
 Any choice of a base category \katbase, of subcategories \kattype, \katcontext, \katcarrier\ of \katbase\ such that $\kattype\sqsubseteq\katcontext$, of a footprint \metasig\ and of a category $\katsemantics(\metasig)$ of \metasig-structures establishes a corresponding \textbf{institution of constraints} $\mathcal{IC}=(\katcontext,\functorCtr,\functorInt,\models)$.
\end{theorem}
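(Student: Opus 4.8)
The plan is to check that the quadruple $\mathcal{IC}=(\katcontext,\functorCtr,\functorInt,\models)$ satisfies the four defining clauses of an institution in the sense of \cite{Dia08,GB92}: a category of signatures, a sentence functor into \katset, a model functor into \katcat, and a family of satisfaction relations subject to the satisfaction condition. Since every ingredient has been assembled piecewise in the preceding subsections, the argument reduces to collecting those pieces and confirming that the one nontrivial clause has already been discharged.

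First I would record that \katcontext\ is a genuine category: by Definition \ref{def:sixth-parameter} it is chosen as a subcategory of \katbase, so it inherits identities and an associative composition. The sentence functor \flar{\functorCtr}{\katcontext}{\katset} was defined by post-composing binding morphisms with context morphisms, and its preservation of identities and composites rests solely on the unit and associativity laws in \katbase, as observed in Section \ref{sec:contexts-and-sentence-functor}. Dually, the model functor \flar{\functorInt}{\katcontext^{op}}{\katcat} was built by pre-composing interpretations with context morphisms, and its contravariant functoriality was validated in Section \ref{sec:model-functor}. Finally, each relation $\models_\contextK$ is well defined by Definition \ref{def:satisfaction-relation}, because \(\bindinga;\interpretationa\) is a \katbase-morphism \(\typeX\to\carrierU\) whose membership in \(\semm{\expression}_\typeX^\structureU\) is fixed by the semantics of Definition \ref{def:feature-expressions:semantics}.

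The only substantive clause is the satisfaction condition, and this is exactly Corollary \ref{coro:satisfaction-condition}, which has already been proven. For a context morphism \flar{\contextmorphisma}{\contextK}{\contextG}, a constraint \constraint{\typeX}{\expression}{\bindinga} on \contextK, and an interpretation \contextinterpretation{\interpretationb}{\structureU} of \contextG, both sides of the biconditional unfold, through the definitions of \functorInt\ and \functorCtr, to the single assertion \(\bindinga;\contextmorphisma;\interpretationb\in\semm{\expression}_\typeX^\structureU\); associativity of composition in \katbase\ identifies the two readings. Assembling these observations yields the theorem.

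The hard part is therefore not this final bookkeeping but the design choices that precede it: keeping feature expressions separate from constraints so that a change of base becomes plain post-composition (a device borrowed from \cite{GB92}), and recording the binding morphism explicitly in each constraint. Once that groundwork is fixed, the satisfaction condition collapses to the associativity square \(\bindinga;(\contextmorphisma;\interpretationb)=(\bindinga;\contextmorphisma);\interpretationb\), and the theorem itself needs no new argument beyond verifying that every component has its declared (co)domain and respects composition.
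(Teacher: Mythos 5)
Your proposal is correct and follows essentially the same route as the paper: Theorem \ref{theo:institution} is stated there as a summary of the preceding development, with the category \katcontext, the functoriality of \functorCtr\ and \functorInt\ (post- and pre-composition), and the satisfaction relation all assembled beforehand, and the satisfaction condition discharged by Corollary \ref{coro:satisfaction-condition}, whose one-line proof is precisely the associativity identification \(\bindinga;(\contextmorphisma;\interpretationb)=(\bindinga;\contextmorphisma);\interpretationb\) that you make explicit.
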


An institution of constraints $\mathcal{IC}=(\katcontext,\functorCtr,\functorInt,\models)$ is the essential building block of a LFOC. We are, however, still missing  expressive tools to axiomatize the semantics of features and to define the corresponding syntactic counterpart, namely deduction rules. 
We will outline this dimension of LFOC's in Section \ref{sec:sketch-rules}.

\section{Sketches and Sketch Morphisms}
\label{sec:sketches}

Any institution gives us a corresponding category of presentations and an extension of the ``model functor'' of the institution to the category of presentations \cite{Dia08,GB92}. 
We will outline this construction for institutions of constraints using, however, the term ``sketch'' instead of ``presentation''. 
\begin{definition}[Sketch]
\label{def:sketch}
A \textbf{(first-order) \metasig-sketch} $\sketchK=(\contextK,\setofconstraintsK)$ is given by a context $\contextK\in\katcontext$ and a set \setofconstraintsK\ of \, \metasig-constraints on \contextK.
\end{definition}
\begin{example}[FOL: Sketches]
\label{ex:FOL:sketches}
''Elementary diagrams'' are discussed in Remark \ref{rem:sketch-vs-structure}.

In ALC we meet contexts as ``sets \(N_O\) of individual names''. ``Assertional axioms'' are ``concept assertions'' or ``role assertions''. A ``concept assertion'', i.e., a statement of the form \(a:\featuregen{C}\) where \(a\in N_O\) and \featuregen{C} is a (derived) concept, can be seen as a constraint \constraint{\{p_1\}}{\featuregen{C}(p_1)}{(a)} on \(N_O\) where ``\((a)\)'' is just a convenient notation for a binding \(\bindinga:\{p_1\}\to N_O\) with \(\bindinga(p_1)=a\).
A ``role assertion'', i.e., a statement of the form \((a,b):\featuregen{R}\) where \(a,b\in N_O\) and \featuregen{R} is a role, can be seen as a constraint \constraint{\{p_1,p_2\}}{\featuregen{R}(p_1,p_2)}{(a,b)} on \(N_O\). An ``ABox'' is a finite set of ``assertional axioms''. So, a pair \((N_O,\calA)\) of a ``set \(N_O\) of individual names'' and an ABox \calA\ of ``assertional axioms'' on \(N_O\) is just a sketch in our sense.
\end{example}
\begin{example}[Category Theory: Sketches]
\label{ex:category-theory:sketches}
These are just the sketches, as we know them, with the essential difference, that we are not restricting ourselves to commutative, limit and colimit diagrams. We do not need to code, for example,  the property ``monic'' by means of a pullback, but can define it directly as a property of edges. We can define ``jointly monic'' directly as a property of a span without referring to products. We can require that only certain paths in a diagram commute and so on.
\end{example}
\begin{example}[Universal Algebra: Sketches]
\label{ex:universal-algebra:sketches}
On this abstraction level, we can represent any many-sorted algebraic signature \(\Sigma\)  by a sketch. Any set of \(\Sigma\)-terms, any set of \(\Sigma\)-equations, any equational \(\Sigma\)-specification can be represented by a sketch. A point is that ``categorical products'' are not needed to describe terms and equations, but only to reason about them. The fact, that Cartesian products are ``categorical products'' in \katset, can be added a posteriori by means of ``universal rules'' (see Remark \ref{rem:universal-rules-axioms}).
\qed
\end{example}
For any context $\contextK\in\katcontext$, any set $\setC\subseteq\functorCtr(\contextK)$ of constraints on \contextK\ and any interpretation \contextinterpretation{\interpretationa}{\structureU} of context \contextK\ in a \metasig-structure \structureU\ in $\katsemantics(\metasig)$ we define (see Definition \ref{def:satisfaction-relation}):
\begin{equation}\label{equ:satisfaction-set-of-constraints}
\contextinterpretation{\interpretationa}{\structureU} \models_\contextK \setC\quad \mbox{iff} \quad \contextinterpretation{\interpretationa}{\structureU} \models_\contextK \constraint{\typeX}{\expression}{\bindinga} \;\; \mbox{for all}\; \constraint{\typeX}{\expression}{\bindinga} \in\setC.
\end{equation}
Note, that the constraints in \setC\ may have different ``variable declarations''  \typeX.

\begin{definition}[Model of sketch]
\label{def:model-of-sketch}
A \textbf{model} of a \metasig-sketch $\sketchK=(\contextK,\setofconstraintsK)$ is an interpretation  \contextinterpretation{\interpretationa}{\structureU} of a context \contextK\ such that $\contextinterpretation{\interpretationa}{\structureU} \models_\contextK \setofconstraintsK$. We denote by $\functorMod(\sketchK)$ the full subcategory of $\functorInt(\contextK)$ determined by all models of \sketchK.
\end{definition}
To define reasonable morphisms between sketches, we have to take into account semantical entailments between constraints:  

\begin{definition}[Constraint entailment]
	\label{def:constraint-entailment}
	For any context $\contextK\in\katcontext$, any sets $\setC, \setD\subseteq\functorCtr(\contextK)$ of constraints on \contextK, we say that \setC\ \textbf{entails \(\setD\) (semantically)}, \(\setC\Vdash_\contextK\setD\) in symbols,  if, and only if, for all interpretations \contextinterpretation{\interpretationa}{\structureU} of \contextK: $\contextinterpretation{\interpretationa}{\structureU} \models_\contextK \setC\,$ implies $\,\contextinterpretation{\interpretationa}{\structureU} \models_\contextK \setD$.
\end{definition}

Note, that we quantify over all interpretations in all structures at once. This is equivalent to the usual two step approach: First, a quantification over all interpretations in single structures and, second, a quantification over all structures.

\begin{definition}[Sketch morphism]
\label{def:sketch-morphism}
A \textbf{morphism} $\contextmorphisma:\sketchK\to\sketchG$ between two \metasig-sketches $\sketchK=(\contextK,\setofconstraintsK)$ and $\sketchG=(\contextG,\setofconstraintsG)$ is given by a morphism $\contextmorphisma:\contextK\to\contextG$ in \katcontext\ such that $\setofconstraintsG\Vdash_\contextG \functorCstr(\contextmorphisma)(\setofconstraintsK)$. 
By $\katsketch(\metasig)$\ we denote the category of all \metasig-sketches. 
\end{definition}
Sketch morphisms $\contextmorphisma:\sketchK\to\sketchG$ with \(\contextK=\contextG\) and \(\contextmorphisma=id_\contextK\) correspond to constraint entailments.
\begin{remark}[Sketch vs. structure]
\label{rem:sketch-vs-structure}
In practical applications, we work only with finite contexts. If we are, however, interested in completeness proofs and/or the construction of free structures, for example, we have to assume that \katcarrier\ is a subcategory of \katcontext. In this case, we have two canonical ways to transform  a \metasig-structure $\structureU=(\carrierU,\setpred^\structureU)$ into a ``semantical \metasig-sketch''. The minimalist variant encodes only the semantics of feature symbols, that is, we take the \metasig-sketch $ \sketchS^\structureU_\setpred= (\carrierU,C^\structureU_\setpred)$
with \(C^\structureU_\setpred=\{\constraint{\a\predP}{\predP(id_{\a\predP})}{\interpretationa}\mid
\predP\in\setpred,\interpretationa\in\semm{\predP}^\structureU\}\). The maximal variant encodes the semantics of all feature expressions, that is, we take the \metasig-sketch $ \sketchS^\structureU= (\carrierU,C^\structureU)$ with  \(C^\structureU=\{\constraint{\typeX}{\expression}{\interpretationa}\mid
\typeX\in\katvar,\,\interpretationa\in\semm{\expression}^\structureU_\typeX\}\).
By construction, \((id_\carrierU,\structureU)\) is a model as well of \(\sketchS^\structureU_\setpred\) as of \(\sketchS^\structureU\) that is, moreover, initial in $\functorMod(\sketchS^\structureU_\setpred)$.  It can be shown, that the minimalist variant gives rise to a full embedding of \(\katsemantics(\metasig)\) into $\katsketch(\metasig)$.

There are no structures in \cite{DW08}. Instead, we worked, essentially, with the minimalist variant of ``semantical sketches''.  
In traditional FOL, we meet the maximal variant in form of \textbf{''elementary diagrams''} \cite{ChangKeisler1990}. The difference is that the carrier of a first-order structure is not taken as a ``context'', located between footprints (signatures) and structures. Instead, each element of the carrier is added as a constant to the signature. As long as \katbase\ is a pre-sheaf topos, it may be possible to use the same trick to produce ``syntactic encodings'' of structures. But, even so, we are convinced, that sketches provide a more direct and adequate tool to produce and to work with ``syntactic encodings'' of structures. 
The ``elementary diagram'' approach needs to encode homomorphisms between structures by means of signature morphisms. This looks inadequate since different abstraction levels are mixed up. 
\qed
\end{remark}

Since, we realize ``changes of bases'' by simple post-composition, we can construct a pushout in  $\katsketch(\metasig)$ by means of a pushout in \katbase\ and a union of translated constraints.
\begin{corollary}[Pushouts]
\label{coro:pushouts}
$\katsketch(\metasig)$ has pushouts as long as \katbase\ has pushouts.
\end{corollary}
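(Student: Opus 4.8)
The plan is to compute the pushout of sketches by taking a pushout of the underlying contexts in \katbase\ and endowing it with the union of the two translated constraint sets, exactly as announced just before the statement. So let \(\contextmorphisma\colon\sketchK\to\sketchL\) and \(\contextmorphismb\colon\sketchK\to\sketchR\) be a span of \metasig-sketch morphisms, with \(\sketchK=(\contextK,\setofconstraintsK)\), \(\sketchL=(\contextL,\setofconstraintsL)\) and \(\sketchR=(\contextR,\setofconstraintsR)\). First I would form the pushout of the underlying context span in \katbase,
$$\xymatrix{
\contextK \ar[r]^{\contextmorphismb} \ar[d]_{\contextmorphisma} & \contextR \ar[d]^{\contextmorphismb'} \\
\contextL \ar[r]_{\contextmorphisma'} & \contextT
}$$
yielding a context \contextT\ together with morphisms \(\contextmorphisma'\colon\contextL\to\contextT\) and \(\contextmorphismb'\colon\contextR\to\contextT\) satisfying \(\contextmorphisma;\contextmorphisma'=\contextmorphismb;\contextmorphismb'\). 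I would then define the candidate pushout sketch as \(\sketchT=(\contextT,\setofconstraintsT)\) with
$$\setofconstraintsT\isdef\functorCstr(\contextmorphisma')(\setofconstraintsL)\cup\functorCstr(\contextmorphismb')(\setofconstraintsR).$$

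To see that \(\contextmorphisma'\) and \(\contextmorphismb'\) are indeed sketch morphisms into \sketchT, it suffices to record two elementary facts about the entailment relation of Definition~\ref{def:constraint-entailment}: it is \textbf{monotone}, in the sense that \(\setofconstraintsT\Vdash_\contextT\setD\) whenever \(\setD\subseteq\setofconstraintsT\), and it is \textbf{closed under right-hand unions}, in the sense that \(\setofconstraintsS\Vdash_\contextS\setC_1\) and \(\setofconstraintsS\Vdash_\contextS\setC_2\) together imply \(\setofconstraintsS\Vdash_\contextS\setC_1\cup\setC_2\); both are immediate from the pointwise definition of \(\models_\contextK\) on sets of constraints. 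Monotonicity gives \(\setofconstraintsT\Vdash_\contextT\functorCstr(\contextmorphisma')(\setofconstraintsL)\) and \(\setofconstraintsT\Vdash_\contextT\functorCstr(\contextmorphismb')(\setofconstraintsR)\), which is precisely the side-condition of Definition~\ref{def:sketch-morphism}. Since a sketch morphism is nothing but a context morphism subject to such a side-condition, the commutativity \(\contextmorphisma;\contextmorphisma'=\contextmorphismb;\contextmorphismb'\) already holding in \katcontext\ at once witnesses commutativity of the square in \(\katsketch(\metasig)\).

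For the universal property, suppose \(\maph\colon\sketchL\to\sketchS\) and \(\mapk\colon\sketchR\to\sketchS\) are sketch morphisms into some \(\sketchS=(\contextS,\setofconstraintsS)\) with \(\contextmorphisma;\maph=\contextmorphismb;\mapk\). The pushout in \katbase\ supplies a unique context morphism \(\mapu\colon\contextT\to\contextS\) with \(\contextmorphisma';\mapu=\maph\) and \(\contextmorphismb';\mapu=\mapk\); as sketch morphisms are determined by their underlying context morphisms, \(\mapu\) is the only possible mediating arrow, so it remains only to verify that it is a sketch morphism. Using that \(\functorCstr\colon\katcontext\to\katset\) is a functor and that the image of a function distributes over unions, I compute
$$\functorCstr(\mapu)(\setofconstraintsT)=\functorCstr(\contextmorphisma';\mapu)(\setofconstraintsL)\cup\functorCstr(\contextmorphismb';\mapu)(\setofconstraintsR)=\functorCstr(\maph)(\setofconstraintsL)\cup\functorCstr(\mapk)(\setofconstraintsR).$$
Since \(\maph\) and \(\mapk\) are sketch morphisms we have \(\setofconstraintsS\Vdash_\contextS\functorCstr(\maph)(\setofconstraintsL)\) and \(\setofconstraintsS\Vdash_\contextS\functorCstr(\mapk)(\setofconstraintsR)\), whence closure under right-hand unions yields \(\setofconstraintsS\Vdash_\contextS\functorCstr(\mapu)(\setofconstraintsT)\), i.e.\ \(\mapu\colon\sketchT\to\sketchS\) is a sketch morphism. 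This establishes \(\sketchT\) as the pushout.

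The one point that needs care — and the step I expect to be the genuine obstacle — is that \katcontext\ is only assumed to be a subcategory of \katbase, so a pushout computed in \katbase\ need neither land in \katcontext\ nor remain a pushout there. The construction therefore presupposes that \katcontext\ is closed under the pushouts of \katbase\ (for instance when \(\katcontext=\katbase\), or whenever the inclusion \(\katcontext\sqsubseteq\katbase\) creates pushouts); under that proviso everything else is the purely formal bookkeeping sketched above, since the entailment relation is used only through its monotonicity and its closure under right-hand unions, and never through any actual reasoning about the structures in \(\katsemantics(\metasig)\).
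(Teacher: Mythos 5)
Your construction is exactly the one the paper intends: the paper gives no proof beyond the one-sentence hint preceding the corollary (pushout of the underlying contexts in \katbase, union of the translated constraint sets), and your verification of the two sketch-morphism side-conditions via monotonicity of $\Vdash$ and its closure under right-hand unions, together with the mediating-morphism computation using functoriality of \functorCstr, fills in that sketch correctly. Your closing caveat is also well taken — the corollary as stated silently assumes that \katcontext\ is closed under pushouts computed in \katbase\ (and that the composed binding morphisms of translated constraints remain in \katcontext), a proviso the paper never makes explicit.
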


For any sketch morphism  $\contextmorphisma:\sketchK\to\sketchG$ the condition $\setofconstraintsG\Vdash_\contextG \functorCstr(\contextmorphisma)(\setofconstraintsK)$ ensures, due to the satisfaction condition, that the functor \flar{\functorInt(\contextmorphisma)}{\functorInt(\contextG)}{\functorInt(\contextK)} restricts to a functor from \functorMod(\sketchG) into \functorMod(\sketchK). In such a way, the assignments $\sketchK\mapsto\katmodels(\sketchK)$ extend to a functor $\functorgen{Mod}:\katsketch^{op}\to\katcat$.
\begin{remark}[Amalgamation]
\label{rem:amalgamation}
Since pushouts in $\katsketch(\metasig)$ are based on pushouts in \katbase, we get, trivially, what is called ``amalgamation'' in Algebraic Specifications \cite{EM85}. Abstractly formulated: \katbase\ having pushouts ensures also that $\functorgen{Mod}:\katsketch(\metasig)^{op}\to\katcat$ is continuous, that is, maps pushouts in  $\katsketch(\metasig)$ into pullbacks in \katcat\ \cite{EGRW98,WK12_JACS}.
\end{remark}

\begin{remark}[Software models]
\label{rem:software-models}
In Software Engineering, sketches appear plainly as  appropriate formalizations of a broad variety of Software Models. In our MDSE papers, we call them, however, specifications (or models) since we experienced that engineers can not perceive a ``sketch'' as something with precise syntax and semantics.
The category  $\katsketch(\metasig)$ is the tool of choice to describe and to study relations between software models and different ways to construct new models out of given ones  (often by pushout constructions).
\qed
\end{remark}

Constraint entailments describe properties of the chosen semantics of our logic. The other way around, we can use them to formulate requirements for the intended semantics of the feature symbols in \metasig. Since constraint entailments concern only constraints on a fixed context, they are, however, not expressive enough to axiomatize, for example, that all (!) vertices do have an identity. To express those kinds of requirements, we need an appropriate variant of the ``sketch-entailments'' in \cite{Mak97}. We call them ``sketch rules''.

\section{Sketch Rules}
\label{sec:sketch-rules}
We get sketch morphisms in a canonical way, for any institution of constraints. We isolate the ``syntactic part'' of sketch morphisms as a concept of its own.
\begin{definition}[Sketch rule]
\label{def:sketch-rule}
A \textbf{\metasig-sketch rule} $\sketchL\stackrel{\contextmorphisma}{\Rightarrow}\sketchR$ is given by two \metasig-sketches $\sketchL=(\contextL,\setofconstraintsL)$, $\sketchR=(\contextR,\setofconstraintsR)$ and a context morphism $\contextmorphisma:\contextL\to\contextR$.
\end{definition}

\begin{definition}[Conservation]
\label{def:conservation}
A \metasig-structure $\structureU=(\carrierU,\setpred^\structureU)$ is \textbf{conservative (model-expansive) w.r.t.\ a \metasig--sketch rule} $\sketchL\stackrel{\contextmorphisma}{\Rightarrow}\sketchR$ if, and only if, each model \contextinterpretation{\interpretationa}{\structureU} of \sketchL\ in \structureU, i.e., \(\contextinterpretation{\interpretationa}{\structureU} \models_\contextL \setofconstraintsL\), can be extended to a model \contextinterpretation{\interpretationb}{\structureU} of \sketchR\ in \structureU, i.e., \(\contextmorphisma;\interpretationb=\interpretationa\) and \(\contextinterpretation{\interpretationb}{\structureU} \models_\contextR \setofconstraintsR\).
$$\xymatrix{
	\contextL \ar[rr]^\substitutiona\ar[dr]_(.3){\contextinterpretation{\interpretationa}{\structureU} \models_\contextL\setofconstraintsL}
	& {}\ar@{}[d]|(.4)=
	& \contextR  \ar[dl]^(.3){\exists\contextinterpretation{\interpretationb}{\structureU} \models_\contextR \setofconstraintsR}
	\\
	& \carrierU
}$$
$\sketchL\stackrel{\contextmorphisma}{\Rightarrow}\sketchR$ is \textbf{sound (in $\mathcal{IC}$)} if, and only if,  all \metasig-structures \structureU\ in $\katsemantics(\metasig)$ are conservatice w.r.t.\ $\sketchL\stackrel{\contextmorphisma}{\Rightarrow}\sketchR$.
\end{definition}
In case \(\contextL=\contextR\) and \(\contextmorphisma=id_\contextL\), we write just $\sketchL\Rightarrow\sketchR$ instead of $\sketchL\stackrel{\contextmorphisma}{\Rightarrow}\sketchR$. 
Soundness reduces, in this case, simply to constraint entailment: Each model \(\contextinterpretation{\interpretationa}{\structureU}\) of \sketchL\ is also a model of \sketchR, i.e., \(\contextinterpretation{\interpretationa}{\structureU} \models_\contextL \setofconstraintsL\) implies \(\contextinterpretation{\interpretationa}{\structureU} \models_\contextL \setofconstraintsR\).
\begin{remark}[Sketch rule vs. sketch morphism]
\label{rem:sketch-rule-vs-sketch-morphism}
Besides the overlap on constraint entailments, the concepts ``sketch morphism'' and ``sound sketch rule'' are skewed. For non-isomorphic \contextmorphisma, sketch morphisms talk about ``reducts of models'' while ``sound sketch rules'' state the existence of ``model extensions''. 

The fact, that the satisfaction condition ensures that each sound rule $\sketchL\stackrel{\contextmorphisma}{\Rightarrow}\sketchR$ gives rise to a trivial sketch morphism $\contextmorphisma:\sketchL\to\sketchR^\contextmorphisma$ with \(\sketchR^\contextmorphisma=(\contextR,\setofconstraintsR\cup\functorCstr(\contextmorphisma)(\setofconstraintsL))\), only highlights that our ``extensions'' are actually ``persistent extensions''. 
\end{remark}
\begin{remark}[Universal rules and Axioms]
\label{rem:universal-rules-axioms}
There are \textbf{universal sketch rules} (or, more precisely, rule schemata) that are sound in any LFOC, since they reflect the structure and semantics of feature expressions. Especially, they may describe the ``folding'' and ``unfolding'' of feature expressions. In case of ``conjunction'', for example, we do have the two sketches \(\sketchL=(\typeX,\{(\typeX\arityof\expressiona\wedge\expressionb,id_\typeX)\})\) and \(\sketchR=(\typeX,\{(\typeX\arityof\expressiona,id_\typeX),(\typeX\arityof\expressionb,id_\typeX)\})\). As well the ``unfolding'' rule $\sketchL\Rightarrow\sketchR$ as the ``folding'' rule  $\sketchR\Rightarrow\sketchL$ are sound in any LFOC. The universal ``unfolding'' rule for ``conditional existential quantification'' gives us a kind of ``\textbf{modus ponens}'' at hand \\
\((\typeX,
    \{(\typeX\arityof\expressiona,id_\typeX),
      (\typeX\arityof\condexistential{\expressiona}{\typemorphisma}{\typeY}{\expressionb},id_\typeX)
    \}
  )
  \stackrel{\contextmorphisma}{\Longrightarrow}
  (\typeY,
  \{(\typeY\arityof\expressionb,id_\typeY)\}).
\)

Other \textbf{universal sketch rules} depend only on the base category \katbase\ of a LFOC. As long as \katbase\ is a pre-sheaf topos, we do have, for example, universal rules at hand expressing reflexivity, symmetry and transitivity of ``equality'' (compare Example \ref{ex:universal-algebra:expressions}).
 
Besides ``universal rules'', we do have also \textbf{''(semantical) induced rules''}, i.e., sketch rules that are sound for all \metasig-structures we have chosen to be in $\katsemantics(\metasig)$. 

As a kind of \textbf{seventh parameter} of a LFOC, we can declare, the other way around,  a set of \textbf{''axiom rules''} where all axiom rules have to be sound for a \metasig-structure to be included \(\katsemantics(\metasig)\). In many cases  \(\katsemantics(\metasig)\) is defined exactly by all those \metasig-structures.
To declare a feature expression (''formula'') \(\typeX\arityof Exp\) as an \textbf{axiom}, in the traditional sense, we have to add a corresponding ``intro  rule'' \((\typeX,\emptyset)\Rightarrow(\typeX,\{(\typeX\arityof Exp,id_\typeX)\})\) to our axiom rules.
\end{remark}

\begin{example}[FOL: Sketch rules]
\label{ex:FOL:sketch-rules}
Horn clauses can be seen as axiom rules $\sketchL\Rightarrow\sketchR$ where the constraints in \sketchL\ and \sketchR\ use only atomic expressions.
A TBox in ALC is a finite set of ``terminological axioms'', i.e., of ``general concept inclusions'' \(\featuregen{C}\sqsubseteq\featuregen{D}\). The way, the semantics of ``general concept inclusions'' is defined in ALC, they correspond to ``axiom rules'' 
\hspace*{5em}\((\{p_1\},\{\constraint{\{p_1\}}{\featuregen{C}(p_1)}{id_{\{p_1\}}}\}) \Longrightarrow 
(\{p_1\},\{\constraint{\{p_1\}}{\featuregen{D}(p_1)}{id_{\{p_1\}}}\})	\).
\end{example}
\begin{example}[Category Theory: Sketch rules]
\label{ex:category-theory:sketch-rules}
There are, at least, two ways to axiomatize that all vertices do have an identity. We can reuse the expression \(\typeX\arityof\existsvia{}{\arity(\featureid)}{\featureid(pe)}\) from Example \ref{ex:category-theory:expressions}, where graph \typeX\ consists only of a vertex \(pv\), and  add the ``intro rule''
\((\typeX,\emptyset)
\Rightarrow
(\typeX,\{\constraint{\typeX}{\existsvia{}{\arity(\featureid)}{\featureid(pe)}}{id_{\typeX}}\})\) to our axiom rules. In this case, we need the universal ``modus ponens'' rule in Remark \ref{rem:universal-rules-axioms} to unfold the existence statement. 

Alternatively, we can declare existence of identities directly by adding an axiom rule \\
\hspace*{7em}
\((\typeX,\emptyset)\stackrel{\contextmorphisma}{\Longrightarrow}(\arity(\featureid),\{\constraint{\arity(\featureid)}{\featureid(pe)}{id_{\arity(\featureid)}}\})\), \\
with \contextmorphisma\ the inclusion of \typeX\ into \(\arity(\featureid)\). 

To require that identity morphisms are always unique, we declare the axiom rule \\
\hspace*{7em}\((\typeY,\{\constraint{\typeY}{\featureid(pe_1)\wedge\featureid(pe_2)}{id_\typeY}\})\stackrel{\contextmorphisma}{\Longrightarrow}(\arity(\featureid),\emptyset)\),\\
where \typeY\ is a graph with one node \(pv\) and two loops \(pe_1,pe_2\), and \(\substitutiona:\typeY\to\arity(\featureid)\) is given by the assignments \((pv\mapsto pv;pe_1,pe_2\mapsto pe)\).
\end{example}
\begin{example}[Universal Algebra: Sketch rules]
\label{ex:universal-algebra:sketch-rules}
Besides the universal rules expressing  reflexivity, symmetry and transitivity of ``term equality'', we do have available any rule induced by the fact, that our only chosen carrier is a finite product category, and by the way we defined the semantics of features in the only chosen structure. We can use rules stating, e.g., the existence of projections (variables considered as terms) or that ``term construction'' is total. We do have ``congruence rules'' for ``term equality'' and so on. 
\qed
\end{example}

Sketches represent, to a greater or lesser extent, properties of parts of our semantic structures and of the models, we are interested in. Sketch rules can be also used to appraise to what extend those properties are represented or have been made explicit in a sketch. 
\begin{definition}[Match]
\label{def:match}
A \textbf{match} of a \metasig-sketch  $\sketchG$ in a \metasig-sketch \sketchK\ is given by a context morphism \(\matcha:\contextG\to\contextK\) such that \(\functorCstr(\matcha)(\setofconstraintsG)\subseteq\setofconstraintsK\).
\end{definition}

\begin{definition}[Closedness]
\label{def:closedness}
A \metasig-sketch \sketchK\	is \textbf{closed under a \metasig-sketch rule} $\sketchL\stackrel{\contextmorphisma}{\Rightarrow}\sketchR$ \textbf{relative to a match} \(\matcha:\contextL\to\contextK\)  of the left-hand side \sketchL\  in \sketchK\ if, and only if, there exists a match \(\matchb:\contextR\to\contextK\) of the right-hand \sketchR\  of the rule in \sketchK\ such that \(\matcha=\contextmorphisma;\matchb\).
$$\xymatrix{
	\contextL \ar[rr]^\substitutiona\ar[dr]_(.3){\matcha\,:\,\functorCstr(\matcha)(\setofconstraintsL)\subseteq\setofconstraintsK}
	& {}\ar@{}[d]|(.4)=
	& \contextR  \ar[dl]^(.3){\exists\matchb\,:\,\functorCstr(\matchb)(\setofconstraintsR)\subseteq\setofconstraintsK}
	\\
	& \contextK
}$$
A \metasig-sketch \sketchK\	is \textbf{closed under a \metasig-sketch rule} $\sketchL\stackrel{\contextmorphisma}{\Rightarrow}\sketchR$ if, and only if, it is closed relative to each match \(\matcha:\contextL\to\contextK\)  of the left-hand side \sketchL\ in \sketchK.
\end{definition}
\begin{remark}[Sketch vs. structure]
\label{rem:sketch-vs-structure:equivalence}
Coming back to  our representation of ``elementary diagrams'' in  Remark \ref{rem:sketch-vs-structure}, we may say that the usefulness of ``elementary diagrams'' in model theory is based on the following \textbf{equivalence of sketches and structures}: 
A \metasig-structure $\structureU=(\carrierU,\setpred^\structureU)$ is conservative w.r.t.\ a \metasig-sketch rule  $\sketchL\stackrel{\contextmorphisma}{\Rightarrow}\sketchR$ if, and only if, the \metasig-sketch $\sketchS^\structureU= (\carrierU,C^\structureU)$  with 
\(C^\structureU=\{\constraint{\typeX}{\expression}{\interpretationa}\mid
\typeX\in\katvar,\,\interpretationa\in\semm{\expression}^\structureU_\typeX\}\) 
is closed under $\sketchL\stackrel{\contextmorphisma}{\Rightarrow}\sketchR$.
\\
\begin{minipage}[c]{.45\linewidth}
$$\xymatrix{
\contextL \ar[rr]^\substitutiona\ar[dr]_(.3){\contextinterpretation{\interpretationa}{\structureU} \models_\contextL\setofconstraintsL}
& {}\ar@{}[d]|(.4)=
& \contextR  \ar[dl]^(.3){\exists\contextinterpretation{\interpretationb}{\structureU} \models_\contextR \setofconstraintsR}
\\
& \carrierU
}$$
\end{minipage}
\begin{minipage}[c]{.55\linewidth}
$$\xymatrix{
\contextL \ar[rr]^\substitutiona\ar[dr]_(.3){\interpretationa\,:\,\functorCstr(\matcha)(\setofconstraintsL)\subseteq C^{\,\structureU}}
& {}\ar@{}[d]|(.4)=
& \contextR  \ar[dl]^(.3){\exists\interpretationb\,:\,\functorCstr(\matchb)(\setofconstraintsR)\subseteq C^{\,\structureU}}
\\
& \carrierU
}$$
\end{minipage}
\end{remark}

\begin{remark}[Deduction]
\label{rem:deduction}
In view of LFOC's, a bunch of deduction calculi can be described as procedures to construct sketches by means of rule applications, where the result of applying a sketch rule for a match of the left-hand side of the rule in a given sketch is constructed as a pushout in \(\katsketch(\metasig)\).
The application of a simple rule  $\sketchL\Rightarrow\sketchR$ will only add constraints to a given sketch, while the application of a rule $\sketchL\stackrel{\contextmorphisma}{\Rightarrow}\sketchR$, with \contextmorphisma\ non-isomorphic, will also extend and/or factorize the underlying context of a sketch. 

There are different parameters for such \textbf{''sketch based deduction calculi''} (and corresponding completeness results): (1) Kind of rules used as axiom rules. (2) Subset of the available universal rules used for deduction. (3) Kind of sketches serving as ``input'' for a deduction procedure. (4) Kind of sketches, we want to have as ``outputs''.


In the rest of the remark, we have to anticipate that our footprints declare also ``operations''.
We may characterize PROLOG by the following choices: Contexts are declarations of variables/individuals. (1) Horn clauses (2) None (?) (3)+(4) ``Facts''. If we consider Algebraic Specifications on the same abstraction level, as the FOL and Category Theory examples in this paper, we find a situation similar to PROLOG  (compare \cite{Rei87,Wol90,CGW95}):  Contexts are declarations of variables/generators. (1) Conditional equations (2) Reflexivity, symmetry, transitivity and congruence rules for ``term equality'' (3)+(4) Sets of equations.

A typical problem, as in Algebraic Specifications, for example, is to deduce for a given set of axiom rules all semantical induced rules, where \(\katsemantics(\metasig)\) is given by all structures such that all axiom rules are sound for them. In Algebraic Specifications, we can solve this problem by means of a sketch based deduction calculus that derives sets of equations from given sets of equations \cite{Rei87,Wol90}. At the moment, it is open for us to what extent this kind of ``deduction theorem'' can be generalized to LFOC's. For us it is very hard (and in some periods even impossible) to approach the peculiar notation as well as the high level of abstraction in \cite{Mak97}, but there should be some hints in this direction. 

Resolution in PROLOG can be seen as a procedure that derives new Horn clauses from given ones. An analogous procedure, called ``parallel resolution'', that allows to deduce conditional equations directly from given ones, is presented in \cite{Wol90}. This procedure is  sound and complete (compare Theorem 5.2.4 in \cite{Wol90}). We hope, that we can generalize those kinds of ``resolution procedures'', at least to a certain extent, to sketch rules in LFOC's.

A \textbf{completeness proof} for a ``sketched based deduction calculus'' may be done in the way, we have essentially done it in \cite{Wol90}: We show that the calculus allows to construct ``freely generated sketches'' that are closed under the relevant rules. Then we show, that those closed sketches can be transformed into structures that are equivalent to the generated sketches, in the spirit of Remark \ref{rem:sketch-vs-structure:equivalence}, and inherit a kind of ``freely generated'' property from the ``freely generated sketches''. 
\qed
\end{remark}

\section{Conclusions and Future Research}
Summing up many experiences, insights, results and ideas from our research in the wider area of formal specifications, we developed in this paper the first basic building block of a framework of Logics of First-Order Constraints (LFOC's). As a sanity check for this new approach to logic, we presented an abstract uniform scheme how to define a certain LFOC in such a way that we get an institution in the sense of \cite{GB92,Dia08}. To a certain extend the paper can be seen as a proposal for a bigger and broader research project aiming at to develop a fully shaped general framework of Logics of First-Order Constraints.
 
There are many open ends, problems, questions and ideas we would like to address within such a research project in the future. We list only few of them:
\begin{itemize}
	\item Substitution framework in LFOC's where \katbase\ has pushouts, especially, for the base categories \katgraph\ and \katGRAPH.
	\item Full account of universal ``folding'' and ``unfolding'' rules.
	\item Deduction calculi (see Remark \ref{rem:deduction}).
	\item Footprints with operations thus substitutions become Kleisli-morphisms \cite{WolterDK18}. 
	\item Restrictions of LFOC's in the spirit of Description Logics.
	\item Semantic-as-instance approach.
	\item Typed contexts and language extensions
	\item Dependencies between features in generalizing the approach and results in \cite{DW08}.
	\item Precise connection between different abstraction levels. Here the semantic-as-instance approach may be most appropriate?
	\item Different relevant restrictions for the construction of feature expressions. See, for example the Open Issue \ref{open:constructive-feature-expressions} (Constructive Feature expressions). 
	\item Relation to nested graph conditions \cite{Rensink04}.
	\item Working out a bunch of examples. 
	\item $\bullet\quad\bullet\quad\bullet$
\end{itemize}



\end{document}